\def\Z{{\mathbb Z}}
\newcommand{\problemdef}[3]{
	\begin{center}
		\begin{boxedminipage}{.99\textwidth}
			\textsc{{#1}}\\[2pt]
			\begin{tabular}{ r p{0.8\textwidth}}
				\textit{~~~~Instance:} & {#2}\\
				\textit{Question:} & {#3}
			\end{tabular}
		\end{boxedminipage}
	\end{center}
}
\newcommand{\NP}{{\sf NP}}
\newcommand{\FPT}{{\sf FPT}}
\title{Independent Feedback Vertex Set for $P_5$-free Graphs\thanks{This paper received support from EPSRC (EP/K025090/1), London Mathematical Society (41536), the Leverhulme Trust (RPG-2016-258) and Fondation Sciences Math\'ematiques de Paris.}}
\author{Marthe Bonamy\inst{1} \and
Konrad K. Dabrowski\inst{2} \and
Carl Feghali\inst{3}
\and\\ Matthew Johnson\inst{2}
\and Dani\"el Paulusma\inst{2}}
\institute{
CNRS, LaBRI, France \texttt{marthe.bonamy@u-bordeaux.fr},
\and
School of Engineering and Computing Sciences, Durham University, UK\\
\texttt{\{konrad.dabrowski,matthew.johnson2,daniel.paulusma\}@durham.ac.uk},
\and
IRIF \& Universit\'e Paris Diderot, Paris France \texttt{feghali@irif.fr}}
\newcommand\displaycase[1]{{\bf #1}}
\newcommand{\qedllncs}{\qed}
\newcounter{ctrclaim}[theorem]
\newcounter{ctrcase}[theorem]
\newcommand{\clm}[1]{\setcounter{ctrcase}{0}\medskip\phantomsection\refstepcounter{ctrclaim}\noindent\displaycase{Claim \thectrclaim. }{\em #1}\\}
\newcommand{\clmnonewline}[1]{\setcounter{ctrcase}{0}\medskip\phantomsection\refstepcounter{ctrclaim}\noindent\displaycase{Claim \thectrclaim. }{\em #1}}
\newcommand{\thmcase}[1]{\medskip\phantomsection\refstepcounter{ctrcase}\noindent\displaycase{Case \thectrcase. }{\em #1}\\}
\begin{document}
\maketitle

\begin{abstract}
The \NP-complete problem {\sc Feedback Vertex Set} is that of deciding whether or not it is possible, for a given integer $k\geq 0$, to delete at most~$k$ vertices from a given graph so that what remains is a forest.
The variant in which the deleted vertices must form an independent set is called {\sc Independent Feedback Vertex Set} and is also \NP-complete.
In fact, even deciding if an independent feedback vertex set exists is \NP-complete and this problem is closely related to the {\sc $3$-Colouring} problem, or equivalently, to the problem of deciding whether or not a graph has an independent odd cycle transversal, that is, an independent set of vertices whose deletion makes the graph bipartite.
We initiate a systematic study of the complexity of {\sc Independent Feedback Vertex Set} for $H$-free graphs.
We prove that it is \NP-complete if~$H$ contains a claw or cycle.
Tamura, Ito and Zhou proved that it is polynomial-time solvable for $P_4$-free graphs.
We show that it remains polynomial-time solvable for $P_5$-free graphs.
We prove analogous results for the {\sc Independent Odd Cycle Transversal} problem, which asks whether or not a graph has an independent odd cycle transversal of size at most~$k$ for a given integer $k\geq 0$.
Finally, in line with our underlying research aim, we compare the complexity of {\sc Independent Feedback Vertex Set} for $H$-free graphs with the complexity of {\sc $3$-Colouring}, {\sc Independent Odd Cycle Transversal} and other related problems.
\end{abstract}

\section{Introduction}\label{s-intro}

Many computational problems in the theory and application of graphs can be formulated as modification problems: from a graph~$G$, some other graph~$H$ with a desired property must be obtained using certain permitted operations.
The number of graph operations used (or some other measure of cost) must be minimised.
The computational complexity of a graph modification problem depends on the desired property, the operations allowed and the possible inputs; that is, we can prescribe the class of graphs to which~$G$ must belong.
This leads to a rich variety of different problems, which makes graph modification a central area of research in algorithmic graph theory.

A set~$S$ of vertices in a graph~$G$ is a {\em feedback vertex set} of~$G$ if removing the vertices of~$S$ results in an acyclic graph, that is, the graph $G-S$ is a forest.
The {\sc Feedback Vertex Set} problem asks whether or not a graph has a feedback vertex set of size at most~$k$ for some integer $k\geq 0$ and is a well-known example of a graph modification problem: the desired property is that the obtained graph is acyclic and the permitted operation is vertex deletion.
The directed variant of {\sc Feedback Vertex Set} was one of the original problems proven to be \NP-complete by Karp~\cite{Ka72}.
The proof of this implies \NP-completeness of the undirected version even for graphs of maximum degree~$4$ (see~\cite{GJ79}).
We refer to the survey of Festa et~al.~\cite{FPR99} for further details of this classic problem.

In this paper, we consider the problem where we require the feedback vertex set to be an independent set.
We call such a set an {\em independent feedback vertex set}.
We have the following decision problem.

\problemdef{\sc Independent Feedback Vertex Set}{a graph~$G$ and an integer $k\geq 0$.}{does~$G$ have an independent feedback vertex set of size at most~$k$?}
Many other graph problems have variants with an additional constraint that a set of vertices must be independent.
For example, see~\cite{GH13} for a survey on {\sc Independent Dominating Set}, and~\cite{MOR13} for {\sc Independent Odd Cycle Transversal}, also known as {\sc Stable Bipartization}.
An {\em independent odd cycle transversal} of a graph~$G$ is an independent set~$S$ such that $G-S$ is bipartite, and the latter problem is that of deciding whether or not a graph has such a set of size at most~$k$ for some given integer~$k$.

We survey known results on {\sc Independent Feedback Vertex Set} below.

\subsection{Related Work}

Not every graph admits an independent feedback vertex set (consider complete graphs on at least four vertices).
Graphs that do admit an independent feedback vertex set are said to be {\em near-bipartite}, and we can ask about the decision problem of recognising such graphs.
\problemdef{\sc Near-Bipartiteness}{a graph~$G$.}{is~$G$ near-bipartite (that is, does~$G$ have an independent feedback vertex set)?}
{\sc Near-Bipartiteness} is \NP-complete even for graphs of maximum degree~$4$~\cite{YY06} and for graphs of diameter~$3$~\cite{BDFJP17} (see~\cite{BDFJP17c} for a proof).
Hence, by setting $k=n$, we find that {\sc Independent Feedback Vertex Set} is \NP-complete for these two graph classes.
The {\sc Independent Feedback Vertex Set} problem is even \NP-complete for planar bipartite graphs of maximum degree~$4$ (see~\cite{TIZ15}).
As bipartite graphs are near-bipartite, this result shows that there are classes of graphs where {\sc Independent Feedback Vertex Set} is harder than {\sc Near-Bipartiteness}.
To obtain tractability results for {\sc Independent Feedback Vertex Set}, we need to make some further assumptions.

One way is to consider the problem from a parameterized point of view.
Taking~$k$ as the parameter, Misra et~al.~\cite{MPRS12} proved that {\sc Independent Feedback Vertex Set} is fixed-parameter tractable by giving a cubic kernel.
This is in line with the fixed-parameter tractability of the general {\sc Feedback Vertex Set} problem (see~\cite{KP14} for the fastest known \FPT\ algorithm).
Later, Agrawal et~al.~\cite{AGSS16} gave a faster \FPT\ algorithm for {\sc Independent Feedback Vertex Set} and also obtained an upper bound on the number of minimal independent feedback vertex sets of a graph.

Another way to obtain tractability results is to restrict the input to special graph classes in order to determine graph properties that make the problem polynomial-time solvable.
We already mentioned some classes for which {\sc Independent Feedback Vertex Set} is \NP-complete.
In a companion paper~\cite{BDFJP17c}, we show that the problem is polynomial-time solvable for graphs of diameter~$2$, and as stated above, the problem is \NP-complete on graphs of diameter~$3$.
Tamura et~al.~\cite{TIZ15} showed that {\sc Independent Feedback Vertex Set} is polynomial-time solvable for chordal graphs, graphs of bounded treewidth and for cographs.
The latter graphs are also known as $P_4$-free graphs ($P_r$ denotes the path on~$r$ vertices and a graph is {\em $H$-free} if it has no induced subgraph isomorphic to~$H$), and this strengthened a result of Brandst\"adt et~al.~\cite{BBKNP13}, who proved that {\sc Near-Bipartiteness} is polynomial-time solvable for $P_4$-free graphs.

\subsection{Our Contribution}
The {\sc Independent Feedback Vertex Set} problem is equivalent to asking for a (proper) $3$-colouring of a graph, such that one colour class has at most~$k$ vertices and the union of the other two induces a forest.
We wish to compare the behaviour of {\sc Independent Feedback Vertex Set} with that of the {\sc $3$-Colouring} problem.
It is well known that the latter problem is also \NP-complete~\cite{Lo73} in general and polynomial-time solvable on many graph classes (see, for instance, the surveys~\cite{GJPS,RS04b}).
We also observe that {\sc $3$-Colouring} is equivalent to asking whether or not a graph has an independent odd cycle transversal (of any size).
However, so far very few graph classes are known for which {\sc Independent Feedback Vertex Set} is tractable and our goal is to find more of them.
For this purpose, we consider $H$-free graphs and extend the result~\cite{TIZ15} for $P_4$-free graphs in a systematic way.

In Section~\ref{s-hard}, we consider the cases where~$H$ contains a cycle or a claw.
We first prove that {\sc Near-Bipartiteness}, and thus {\sc Independent Feedback Vertex Set}, is \NP-complete on line graphs, which form a subclass of the class of claw-free graphs.
We then prove that {\sc Independent Feedback Vertex Set} is \NP-complete for graphs of arbitrarily large girth.
Together, these results imply that {\sc Independent Feedback Vertex Set} is \NP-complete for $H$-free graphs if~$H$ contains a cycle or claw.
Hence, only the cases where~$H$ is a {\em linear forest}, that is, a disjoint union of paths, remain open.
In particular, the case where~$H$ is a single path has not yet been resolved.
Due to the result of~\cite{TIZ15} for $P_4$-free graphs, the first open case to consider is when $H=P_5$ (see also \figurename~\ref{fig:P4P5}).

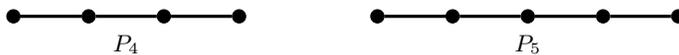
\begin{figure}
\begin{center}
\tikzstyle{vertex}=[circle,draw=black, fill=black, minimum size=5pt, inner sep=1pt]
\tikzstyle{edge} =[draw,black]
\begin{tabular}{cc}
\begin{minipage}{0.4\textwidth}
\begin{center}
\begin{tikzpicture}
\foreach \num in {0,1,2,3} \node[vertex] (x0\num) at (\num,0) {};
\foreach \source/\dest in {x00/x01, x01/x02, x02/x03} \path[edge, black, very thick] (\source) -- (\dest);
\end{tikzpicture}
\end{center}
\end{minipage}
&
\begin{minipage}{0.4\textwidth}
\begin{center}
\begin{tikzpicture}
\foreach \num in {0,1,2,3,4} \node[vertex] (x0\num) at (\num,0) {};
\foreach \source/\dest in {x00/x01, x01/x02, x02/x03, x03/x04} \path[edge, black, very thick] (\source) -- (\dest);
\end{tikzpicture}
\end{center}
\end{minipage}\\
$P_4$ & $P_5$
\end{tabular}
\end{center}
\caption{\label{fig:P4P5}The paths on four and five vertices.}
\end{figure}

The class of $P_5$-free graphs is a well-studied graph class.
For instance, Ho{\`{a}}ng et~al.~\cite{HKLSS10} proved that for every integer~$k$, {\sc $k$-Colouring} is polynomial-time solvable for $P_5$-free graphs, whereas Golovach and Heggernes~\cite{GH09} showed that {\sc Choosability} is fixed-parameter tractable for $P_5$-free graphs when parameterized by the size of the lists of admissible colours.
Lokshantov et~al.~\cite{LVV14} solved a long-standing open problem by giving a polynomial-time algorithm for {\sc Independent Set} restricted to $P_5$-free graphs (recently, their result was extended to $P_6$-free graphs by Grzesik et~al.~\cite{GKPP17}).

Our main result is that {\sc Independent Feedback Vertex Set} is polynomial-time solvable for $P_5$-free graphs.
This is proved in Sections~\ref{s-list} and~\ref{s-ind}: in Section~\ref{s-list} we give a polynomial-time algorithm for {\sc Near-Bipartiteness} on $P_5$-free graphs, and in Section~\ref{s-ind} we show how to {\em extend} this algorithm to solve {\sc Independent Feedback Vertex Set} in polynomial time for $P_5$-free graphs.

\begin{sloppypar}
In Section~\ref{a-d} we consider the related problem {\sc Independent Odd Cycle Transversal}.
We prove that all our results for {\sc Independent Feedback Vertex Set} also hold for {\sc Independent Odd Cycle Transversal}.
\end{sloppypar}

In Section~\ref{s-con}, we compare the complexities of {\sc Independent Feedback Vertex Set} and {\sc Independent Odd Cycle Transversal} for $H$-free graphs with the complexity of {\sc $3$-Colouring} and several other related problems, such as {\sc Feedback Vertex Set}, {\sc Vertex Cover}, {\sc Independent Vertex Cover} and {\sc Dominating Induced Matching}.
We also survey some related open problems.

\section{Hardness When~$H$ Contains a Cycle or Claw}\label{s-hard}

Before stating the results in this section, we first introduce some necessary terminology.
The {\em line graph}~$L(G)$ of a graph $G=(V,E)$ has the edge set~$E$ of~$G$ as its vertex set, and two vertices~$e_1$ and~$e_2$ of~$L(G)$ are adjacent if and only if~$e_1$ and~$e_2$ share a common end-vertex in~$G$.
The {\em claw} is the graph shown in \figurename~\ref{fig:claw}.
It is well known and easy to see that every line graph is claw-free.
A graph is {\em (sub)cubic} if every vertex has (at most) degree~$3$.

We first prove that {\sc Near-Bipartiteness} is \NP-complete for line graphs.
It was already known that {\sc Feedback Vertex Set} is \NP-complete for line graphs of planar cubic bipartite graphs~\cite{Mu17}.

\begin{figure}
\begin{center}
\tikzstyle{vertex}=[circle,draw=black, fill=black, minimum size=5pt, inner sep=1pt]
\tikzstyle{edge} =[draw,black]
\begin{tikzpicture}
\node[vertex] (x) at (0:0) {};
\node[vertex] (a) at (90:1) {};
\node[vertex] (b) at (90+120:1) {};
\node[vertex] (c) at (90+240:1) {};
\foreach \source in {a,b,c} \path[edge, black, very thick] (\source) -- (x);
\end{tikzpicture}
\end{center}
\caption{\label{fig:claw}The claw.}
\end{figure}
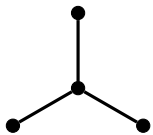

\begin{theorem}\label{t-hard}
{\sc Near-Bipartiteness} is \NP-complete for line graphs of planar subcubic bipartite graphs.
\end{theorem}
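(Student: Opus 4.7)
The plan is to first translate the line-graph version of \textsc{Near-Bipartiteness} into a purely graph-theoretic statement about the underlying graph. Using the standard correspondence between independent sets in $L(G)$ and matchings in $G$, together with the fact that $L(H)$ is a forest if and only if $H$ is a linear forest, I obtain the reformulation: $L(G)$ is near-bipartite if and only if $G$ admits a matching $M$ covering every vertex of degree $3$ in $G$ such that $G-M$ is a linear forest (equivalently, $E(G)$ decomposes into a matching and a linear forest). Membership in \NP{} is then immediate: one guesses $M$ and verifies in polynomial time that $M$ is a matching, covers the degree-$3$ vertices, and that $G-M$ is acyclic of maximum degree $\leq 2$.

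For \NP-hardness on inputs that are line graphs of planar subcubic bipartite graphs, I would reduce from a problem that is known to be \NP-hard on a planar bipartite class, such as \textsc{Hamiltonian Cycle} on planar cubic bipartite graphs or a planar bipartite variant of \textsc{Not-All-Equal 3-SAT}. Given an instance, the reduction would build a planar subcubic bipartite graph $G^\star$ out of small pieces: (i) a \emph{vertex/variable gadget} whose admissible local matchings are in bijection with the intended local choices (truth value, or which incident edge of a cubic vertex is ``unused'' by the Hamiltonian cycle); (ii) \emph{wire gadgets} that are paths with an even number of subdivisions, both to preserve bipartiteness and to propagate the local choice consistently; (iii) a \emph{clause/consistency gadget} whose matching-plus-linear-forest extensions exist exactly when the encoded global condition is met (the clause is satisfied, or the 2-factor is connected). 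Each piece is planar, bipartite and of maximum degree $3$, and the pieces are glued along their ports so that these three properties are preserved globally. Finally, one verifies that satisfying assignments (respectively Hamiltonian cycles) of the input biject with valid matchings $M$ in $G^\star$ with $G^\star - M$ a linear forest.

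The main technical obstacle is controlling the acyclicity of $G^\star - M$, because this is a \emph{global} condition, whereas the gadgets impose only local matching constraints. In particular, one must rule out ``parasitic'' cycles threaded through several gadgets via unexpected combinations of wire edges; this is typically ensured by designing each wire so that every valid solution is forced to place at least one $M$-edge on it, thereby killing any long cycle before it can close up. A secondary obstacle is the rigidity of the subcubic + bipartite + planar constraints taken simultaneously: achieving all three at every gadget requires careful placement of ports and even-length subdivisions, and one needs a planar embedding of the gluing scheme coming from the planarity of the source instance.
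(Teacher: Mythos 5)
Your \NP-membership argument and your reformulation are both correct and worth keeping: an independent set in $L(G)$ is a matching $M$ of $G$, and $L(G)-M=L(G-M)$ is a forest precisely when $G-M$ is a disjoint union of paths, so $L(G)$ is near-bipartite if and only if $E(G)$ decomposes into a matching and a linear forest. (The requirement that $M$ cover every degree-$3$ vertex is already forced by the linear-forest condition, so it is redundant, but harmless.) The gap is in the hardness half: you name a plausible source problem and then describe vertex, wire and clause gadgets whose existence you never establish. You correctly identify the central obstruction yourself --- acyclicity of $G^\star-M$ is a global condition, and parasitic cycles threading through several gadgets must be excluded --- but you do not resolve it, and no concrete gadget satisfying the simultaneous planar/bipartite/subcubic constraints is exhibited. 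As written, this is a plan for a reduction, not a reduction.

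What you are missing is that no gadgets are needed at all, and in fact your own reformulation nearly finishes the job. The paper reduces from \textsc{Hamilton Cycle Through Specified Edge} on planar cubic bipartite graphs (\NP-complete by Labarre, via Akiyama et al.): given $(G,e)$ with $e=u_1u_2$, it deletes $e$ and attaches two pendant vertices $v_1,v_2$ to $u_1,u_2$, obtaining a planar subcubic bipartite graph $G'$ that has a Hamilton path from $v_1$ to $v_2$ if and only if $G$ has a Hamilton cycle through $e$. Now apply your reformulation to $G'$: every vertex except $v_1,v_2$ has degree $3$, so any valid matching $M$ must cover all $n$ of them and hence has exactly $n/2$ edges, leaving exactly $\frac{3}{2}n+1-\frac{n}{2}=n+1$ edges for the linear forest on the $n+2$ vertices of $G'$. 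A linear forest on $n+2$ vertices has at most $n+1$ edges, with equality only if it is a single Hamilton path, which must moreover end at the two degree-$1$ vertices $v_1,v_2$. Conversely, a Hamilton path from $v_1$ to $v_2$ leaves behind a perfect matching on $V(G')\setminus\{v_1,v_2\}$. So the decomposition exists if and only if the Hamilton path does, and the counting argument does all the work that your unspecified gadgets were meant to do. (The paper phrases this same count inside $L(G')$ using claw-freeness to force the forest to be a disjoint union of paths, but it is the identical calculation.)
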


\begin{proof}
\setcounter{ctrclaim}{0}
As the problem is readily seen to be in \NP, it suffices to prove \NP-hardness.
We reduce from the {\sc Hamilton Cycle Through Specified Edge} problem.
Given a graph~$G$ and an edge~$e$ of~$G$, this problem asks whether~$G$ has a Hamilton cycle through~$e$.
Labarre~\cite{La11} observed that {\sc Hamilton Cycle Through Specified Edge} is \NP-complete for planar cubic bipartite graphs
by noting that it follows easily from the analogous result of Akiyama et~al.~\cite{ANS80} for {\sc Hamilton Cycle}.
Let~$(G,e)$ be an instance of {\sc Hamilton Cycle Through Specified Edge}.
By the aforementioned result, we may assume that~$G$ is a planar cubic bipartite graph.
Let~$u_1$ and~$u_2$ be the two end-vertices of~$e$.
Delete the edge~$u_1u_2$ and add two new vertices~$v_1$ and~$v_2$, and two new edges~$e_1=u_1v_1$ and~$e_2=v_2u_2$.
Let~$G$ be the resulting graph.
We note that both~$v_1$ and~$v_2$ have degree~$1$ in~$G'$, whereas every other vertex of~$G'$ has degree~$3$.
Hence~$G'$ is subcubic.
Since~$G$ is planar and bipartite, it follows that~$G'$ is planar and bipartite.
Moreover, we make the following observation.

\clmnonewline{\label{clm:p-hard2:1}The graph~$G$ has a Hamilton cycle through~$e$ if and only if~$G'$ has a Hamilton path from~$v_1$ to~$v_2$.}

\medskip
\noindent
Let~$n$ be the number of vertices in~$G$.
Then the number of vertices in~$G'$ is $n+2$, meaning that a Hamilton path in~$G'$ has $n+1$ edges.
Moreover, as~$G$ is cubic, $G$ has~$\frac{3}{2}n$ edges, so~$G'$ has $\frac{3}{2}n+1$ edges, implying that~$L(G')$ has $\frac{3}{2}n+1$ vertices.
Furthermore, $e_1$ and~$e_2$ have degree~$2$ in~$L(G')$ and all other vertices in~$L(G')$ have degree~$4$, so~$L(G')$ has maximum degree at most~$4$.
 
\clm{\label{clm:p-hard2:2}The graph~$G'$ has a Hamilton path from~$v_1$ to~$v_2$ if and only if~$L(G')$ is near-bipartite.}
To prove Claim~\ref{clm:p-hard2:2}, first suppose that~$G'$ has a Hamilton path~$P$ from~$v_1$ to~$v_2$.
Then, as every vertex in~$G'$ apart from~$v_1$ and~$v_2$ has degree~$3$, it follows that $G'-E(P)$ consists of two isolated vertices ($v_1$ and~$v_2$) and a set~$S$ of isolated edges.
Thus~$S$ is an independent set in~$L(G')$, and so~$L(G')$ is near-bipartite.

Now suppose that~$L(G')$ is near-bipartite.
Then~$L(G')$ contains a set~$S$ of vertices, such that $F=L(G')-S$ is a forest.
As~$L(G')$ is a line graph,~$L(G')$ is claw-free.
This means that~$F$ is the disjoint union of one or more paths.
Suppose that~$F$ contains more than one path.
Then, as~$e_1$ and~$e_2$ are the only two vertices in~$L(G')$ that are of degree~$2$ and all other vertices of~$L(G')$ have degree~$4$, at least one path of~$F$ has an end-vertex of degree~$4$ in~$L(G')$.
Let~$f$ be this vertex.
As~$f$ is the end-vertex of a path in~$F$, we find that~$f$ has three neighbours~$f_1$, $f_2$,~$f_3$ in~$S$.
As~$S$ is an independent set, $\{f,f_1,f_2,f_3\}$ induces a claw in~$L(G')$.
This contradiction tells us that~$F$ consists of exactly one path~$P$, and by the same reasoning, $e_1$ and~$e_2$ must be the end-vertices of~$P$.

Hence every vertex of~$P$ apart from its end-vertices has two neighbours in~$S$ and every vertex in~$S$ has four neighbours on~$P$.
Moreover, $e_1$ and~$e_2$ have exactly one neighbour in~$S$.
This means that $1+1+2(|V(P)|-2)=4|S|$, so $|S|=\frac{1}{2}(|V(P)|-1)$.
Hence we find that
\[
\begin{array}{lcl}
\frac{3}{2}(|V(P)|-1)+1 &= &|V(P)|+\frac{1}{2}(|V(P)|-1)\\[5pt]
&= &|V(P)|+|S|\\[5pt]
&= &|L(G')|\\[5pt]
&= &\frac{3}{2}n+1,
\end{array}
\]
so $|V(P)|=n+1$.
Hence, as~$G'$ has $n+2$ vertices, the pre-image of~$P$ in~$G'$ is a Hamilton path of~$G'$ with end-vertices~$v_1$ and~$v_2$.

\medskip
\noindent
By combining Claims~\ref{clm:p-hard2:1} and~\ref{clm:p-hard2:2} we have completed our hardness reduction and the theorem is proved.\qedllncs
\end{proof}

Theorem~\ref{t-hard} has the following immediate consequence (take $k=n$).

\begin{corollary}\label{c-hard}
{\sc Independent Feedback Vertex Set} is \NP-complete for line graphs of planar subcubic bipartite graphs.
\end{corollary}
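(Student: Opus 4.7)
The plan is to observe that Corollary~\ref{c-hard} is essentially immediate from Theorem~\ref{t-hard} via the trivial parameter choice $k = |V(G)|$. Membership in \NP\ is clear: given a candidate set $S$, one can verify in polynomial time that $|S| \le k$, that $S$ is independent in $G$, and that $G - S$ is acyclic.

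For \NP-hardness, I would give a (very short) polynomial-time reduction from {\sc Near-Bipartiteness} on line graphs of planar subcubic bipartite graphs, which is \NP-complete by Theorem~\ref{t-hard}. Given an instance $G$ of {\sc Near-Bipartiteness} from this class, set $n = |V(G)|$ and output the instance $(G, n)$ of {\sc Independent Feedback Vertex Set}. Since any independent feedback vertex set of $G$ is a subset of $V(G)$, it trivially has size at most $n$. Therefore $G$ admits some independent feedback vertex set (that is, $G$ is near-bipartite) if and only if $G$ has an independent feedback vertex set of size at most $n$. The constructed instance is still a line graph of a planar subcubic bipartite graph since $G$ itself is, so the problem class is preserved.

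There is no real obstacle here; the only thing to be careful about is to state the reduction cleanly and to note explicitly that the graph class restriction is inherited because we pass $G$ through unchanged.
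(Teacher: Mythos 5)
Your proposal is correct and matches the paper's argument exactly: the paper derives Corollary~\ref{c-hard} from Theorem~\ref{t-hard} precisely by taking $k=n$, which is the same trivial reduction from {\sc Near-Bipartiteness} that you describe. Nothing further is needed.
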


We will now prove that {\sc Independent Feedback Vertex Set} is \NP-complete for graphs with no small cycles even if their maximum degree is small.
The {\em length} of a cycle~$C$ is the number of edges of~$C$.
The {\em girth}~$g(G)$ of a graph~$G$ is the length of a shortest cycle of~$G$; if~$G$ has no cycles then $g(G)=\infty$.
The {\em subdivision} of an edge $e=uv$ in a graph deletes~$e$ and adds a new vertex~$w$ and edges~$uw$ and~$wv$.
We first need the following observation, which is well known.
For completeness we give a short proof.

\begin{lemma}[see e.g.~\cite{MPRS12}]\label{l-wellknown}
Let~$uv$ be an edge in a graph~$G$.
Let~$G'$ be the graph obtained from~$G$ after subdividing~$uv$.
Then~$G$ has a feedback vertex set of size at most~$k$ if and only if~$G'$ does.
\end{lemma}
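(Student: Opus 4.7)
The plan is to exploit the fact that subdivision simply inserts a new degree-two vertex $w$ between $u$ and $v$, so cycles in $G$ and $G'$ are in natural bijection: every cycle of $G$ using the edge $uv$ corresponds to a cycle of $G'$ in which $uv$ is replaced by the path $u$-$w$-$v$, while cycles avoiding $uv$ appear verbatim in both graphs.

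For the forward direction, I would start from a feedback vertex set $S$ of $G$ with $|S| \le k$ and argue that the \emph{same} set $S$, viewed as a subset of $V(G')$, is still a feedback vertex set of $G'$. Any cycle $C'$ in $G'$ either avoids $w$, in which case $C'$ is also a cycle of $G$ and hence meets $S$, or it contains $w$, in which case suppressing the degree-two vertex $w$ yields a cycle of $G$ through $uv$ that meets $S$; since $w \notin S$, the corresponding vertex of $S$ still lies on $C'$.

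For the reverse direction, let $S'$ be a feedback vertex set of $G'$ with $|S'| \le k$. If $w \notin S'$ then $S' \subseteq V(G)$, and the same bijection of cycles shows $S'$ is a feedback vertex set of $G$. Otherwise $w \in S'$, and I would set $S := (S' \setminus \{w\}) \cup \{u\}$, which has size at most $|S'| \le k$. To see $S$ is a feedback vertex set of $G$, take any cycle $C$ of $G$: if $C$ uses the edge $uv$ then $u \in S$ hits $C$; and if $C$ avoids $uv$ then $C$ is also a cycle of $G'$ not containing $w$, and so meets $S' \setminus \{w\} \subseteq S$.

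The proof is essentially a bookkeeping exercise built around the cycle bijection; the only genuine subtlety is the case $w \in S'$ in the reverse direction, which is handled by substituting $u$ (or symmetrically $v$) for $w$. I do not anticipate any real obstacle beyond verifying these two cases cleanly.
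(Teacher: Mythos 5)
Your proof is correct and follows essentially the same route as the paper: the same set works in the forward direction, and in the reverse direction the vertex $w$, if present, is swapped for $u$. The only cosmetic difference is that the paper splits the $w\in S'$ case according to whether $u$ or $v$ already lies in $S'$, whereas you observe (correctly) that $(S'\setminus\{w\})\cup\{u\}$ never exceeds $|S'|$ in either sub-case, so the split is unnecessary.
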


\begin{proof}
Let~$w$ denote the new vertex obtained from subdividing~$uv$.
Any feedback vertex set~$S$ of~$G$ is a feedback vertex set of~$G'$.
Suppose~$S'$ is a feedback vertex set of~$G'$.
If $w\notin S'$, then~$S'$ is a feedback vertex set of~$G$.
Suppose $w\in S'$.
If at least one of~$u$ and~$v$ are in~$S'$ as well, then $S'\setminus\{w\}$ is a feedback vertex set of~$G$.
If neither~$u$ nor~$v$ belong to~$S'$, then $(S'\setminus \{w\})\cup \{u\})$ is a feedback vertex set of~$G$ with the same size as~$S'$.\qedllncs
\end{proof}

Lemma~\ref{l-wellknown} implies that {\sc Feedback Vertex Set} is \NP-complete for graphs of girth at least~$g$ for every constant $g\geq 3$.
We also use this lemma to prove our next result.

\begin{proposition}\label{p-hard1}
For every constant~$g\geq 3$, {\sc Independent Feedback Vertex Set} is \NP-complete for graphs of maximum degree at most~$4$ and girth at least~$g$.
\end{proposition}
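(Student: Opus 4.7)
The plan is to reduce from \textsc{Independent Feedback Vertex Set} on graphs of maximum degree at most~$4$, which is \NP-complete by the near-bipartiteness result of Yannakakis~\cite{YY06} mentioned in the introduction. Given an input $(G,k)$, I would construct $G^*$ by subdividing each edge of $G$ the same constant number of times, say $p$, with $p$ chosen large enough that the girth of $G^*$ is at least~$g$. Since subdivision never increases vertex degrees, $G^*$ has maximum degree at most~$4$; and since each cycle of $G$ of length $c\ge 3$ becomes a cycle of length $c(p+1)$ in $G^*$, taking $p+1\ge g/3$ suffices to enforce the girth bound.

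The forward direction of the equivalence is immediate: any independent feedback vertex set $S$ of $G$ of size at most~$k$ is again an independent feedback vertex set of $G^*$ of size at most~$k$. Subdivision introduces no new adjacencies between original vertices, so $S$ remains independent; and $G^*-S$ is obtained from $G-S$ by subdividing each edge, hence is still a forest.

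The main obstacle is the reverse direction. Given an independent feedback vertex set $S^*$ of $G^*$ of size at most~$k$, I would apply exchange operations in the spirit of the proof of Lemma~\ref{l-wellknown} to produce an independent feedback vertex set $S$ of $G$ of size at most~$k$. Two types of move would be used. The first slides a subdivision vertex of $S^*$ along its long path to an equivalent position, without changing $|S^*|$ or the forest/independence properties (this is a direct adaptation of the case analysis in the proof of Lemma~\ref{l-wellknown} to the independence setting). The second is used when two original endpoints $u,v$ of a subdivided edge both end up in $S^*$: this is consistent with independence in $G^*$ (where $u$ and $v$ are far apart) but would violate independence back in $G$, so one of $u,v$ must be swapped for an interior subdivision vertex on the path between them, chosen far from every other element of $S^*$. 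Taking $p$ sufficiently large (any $p$ at least linear in~$|V(G)|$ suffices) guarantees that each subdivided path contains enough slack for such a swap target to exist, so that the swap preserves independence and does not reintroduce any cycle. Iterating these exchanges yields an independent feedback vertex set $S\subseteq V(G)$ of size at most $k$, completing the reduction; the delicate point is verifying global consistency of the exchanges across subdivided paths that share original endpoints, which is where the analysis is most technical.
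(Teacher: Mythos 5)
The reduction you propose does not work: the reverse direction you need is false, and no exchange argument can repair it. Subdividing edges preserves the feedback vertex set number (Lemma~\ref{l-wellknown}) but does \emph{not} preserve the independent feedback vertex set number, nor even the existence of an independent feedback vertex set. Take $G=K_4$ (maximum degree~$3$, so a legal instance of the problem you reduce from). Then~$G$ has no independent feedback vertex set at all: every feedback vertex set of~$K_4$ has at least two vertices, and any two vertices of~$K_4$ are adjacent. Yet your subdivided graph~$G^*$ \emph{does} have an independent feedback vertex set of size~$2$: take the original vertex~$1$ together with any interior vertex~$w$ of the path replacing the edge~$23$; these are non-adjacent in~$G^*$, and deleting them leaves the path from~$2$ through~$4$ to~$3$ with some pendant paths attached, which is a tree. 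So $(G,2)$ is a \texttt{no}-instance while $(G^*,2)$ is a \texttt{yes}-instance, and the equivalence underlying your reduction fails. Concretely, the breakdown is in your second exchange move: when $u,v\in S^*$ with $uv\in E(G)$, swapping~$v$ for an interior subdivision vertex of the $u$--$v$ path only breaks the cycles that use the edge~$uv$; it destroys coverage of every other cycle through~$v$, and (as the $K_4$ example shows) there may be no way to re-cover those cycles without violating independence in~$G$. Making~$p$ large cannot help, since the target statement itself is false.

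The paper sidesteps this entirely by reducing from the \emph{non-independent} {\sc Feedback Vertex Set} problem, which by Lemma~\ref{l-wellknown} remains \NP-complete after repeated subdivision (yielding maximum degree at most~$4$ and girth at least~$g$), and then observing that in a fully subdivided graph any non-independent feedback vertex set contains two adjacent vertices one of which has degree~$2$ and can simply be discarded; hence every \emph{minimum} feedback vertex set of the subdivided graph is automatically independent, so {\sc Feedback Vertex Set} and {\sc Independent Feedback Vertex Set} have the same answer on these instances. If you want to keep your architecture, change the source problem to {\sc Feedback Vertex Set} and replace the exchange argument by this one-directional observation; the implication ``$G^*$ has an independent feedback vertex set of size at most~$k$ implies $G$ does'' should never be needed.
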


\begin{proof}
For a graph~$G$, let~$G_s$ be the graph obtained from~$G$ after subdividing every edge of~$G$; we say that~$G_s$ is a {\em subdivided} copy of~$G$.
Let~${\cal G}_s$ be the graph class obtained from a graph class~${\cal G}$ after replacing each $G\in {\cal G}$ by its subdivided copy~$G_s$.
It follows from Lemma~\ref{l-wellknown} that if {\sc Feedback Vertex Set} is \NP-complete for some graph class~${\cal G}$, then it is also \NP-complete for~${\cal G}_s$.
By starting from the fact that {\sc Feedback Vertex Set} is \NP-complete for line graphs of planar cubic bipartite graphs~\cite{Mu17} and applying this observation a sufficient number of times, we find that for any constant $g\geq 3$, {\sc Feedback Vertex Set} is \NP-complete for graphs of maximum degree at most~$4$ and girth at least~$g$.
Moreover, any non-independent feedback vertex set~$S$ of a subdivided copy~$G_s$ of a graph~$G$ contains two adjacent vertices, one of which has degree~$2$ in~$G_s$.
Hence, we can remove such a degree~$2$ vertex from~$S$ to obtain a smaller feedback vertex set of~$G_s$.
Thus all minimum feedback vertex sets of~$G_s$ are independent.
This observation, which can also be found in~\cite{MPRS12}, together with \NP-hardness for {\sc Feedback Vertex Set} for graphs of maximum degree at most~$4$ and with arbitrarily large girth proves the proposition.\qedllncs
\end{proof}

Recall that every line graph is claw-free.
We also observe that for a graph~$H$ with a cycle~$C$, the class of graphs of girth at least $|C|+1$ is a subclass of the class of $H$-free graphs.
Hence, we can combine Corollary~\ref{c-hard} and Proposition~\ref{p-hard1} to obtain the following result.

\begin{corollary}\label{c-hard2}
Let~$H$ be a graph that contains a claw or a cycle.
Then {\sc Independent Feedback Vertex Set} is \NP-complete for $H$-free graphs of maximum degree at most~$4$.
\end{corollary}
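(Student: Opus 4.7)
The plan is to split into cases according to whether $H$ contains a claw or a cycle, and to verify in each case the suitable class inclusion so that one of the earlier hardness results applies directly within the $H$-free world of maximum degree at most~$4$.

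First, suppose $H$ contains a claw. Every claw-free graph is then $H$-free, so in particular every line graph is $H$-free. I would specialize to the class used in Corollary~\ref{c-hard}, namely line graphs of planar subcubic bipartite graphs. Since the underlying graph has maximum degree at most~$3$, any vertex of the line graph (an edge of the original) is adjacent to at most $2(3-1)=4$ other edges, so the line graph has maximum degree at most~$4$. Thus Corollary~\ref{c-hard} already yields \NP-completeness of {\sc Independent Feedback Vertex Set} on a subclass of $H$-free graphs of maximum degree at most~$4$, settling this case.

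Next, suppose $H$ contains a cycle $C$. Then every graph $G$ with $g(G) \geq |C|+1$ contains no cycle of length $|C|$ as a subgraph, let alone as an induced subgraph; in particular $G$ cannot contain $H$ as an induced subgraph, so $G$ is $H$-free. Applying Proposition~\ref{p-hard1} with the constant $g = |C|+1$, I obtain \NP-completeness of {\sc Independent Feedback Vertex Set} on graphs of maximum degree at most~$4$ and girth at least $|C|+1$, which is a subclass of $H$-free graphs of maximum degree at most~$4$. Since $H$ is assumed to contain a claw or a cycle, one of the two cases applies, which completes the argument.

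There is essentially no obstacle here: both Corollary~\ref{c-hard} and Proposition~\ref{p-hard1} are tailored precisely so that the corresponding graph classes sit inside the desired $H$-free class. The only mild point to be careful about is the degree bound in the claw case, which reduces to the elementary observation that line graphs of subcubic graphs have maximum degree at most~$4$.
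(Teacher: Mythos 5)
Your proposal is correct and follows essentially the same route as the paper: the claw case reduces to Corollary~\ref{c-hard} via the fact that line graphs are claw-free, and the cycle case reduces to Proposition~\ref{p-hard1} via the girth bound $g \geq |C|+1$. The only addition is your explicit check that line graphs of subcubic graphs have maximum degree at most~$4$, which the paper also verifies (inside the proof of Theorem~\ref{t-hard}).
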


\section{Near-Bipartiteness of $P_5$-free Graphs}\label{s-list}

In this section, we show that {\sc Near-Bipartiteness} is polynomial-time solvable for $P_5$-free graphs, i.e. we give a polynomial-time algorithm for testing whether or not a $P_5$-free graph has an independent feedback vertex set.
To obtain a {\em minimum} feedback vertex set we need to first run this algorithm and then do the additional work described in Section~\ref{s-ind}.

Our algorithm in this section solves a slightly more general problem, which is a special variant of {\sc List $3$-Colouring}.
In the {\sc List $3$-Colouring} problem each vertex~$v$ is assigned a subset~$L(v)$ of colours from $\{1,2,3\}$ and we must verify whether or not a $3$-colouring exists in which each vertex~$v$ is coloured with a colour from~$L(v)$.
We say that a $3$-colouring of a graph~$G$ is {\em semi-acyclic} if the vertices coloured~$2$ or~$3$ induce a forest, and we note that~$G$ has such a colouring if and only if~$G$ is near-bipartite.
This leads to the following variant of {\sc List $3$-Colouring}.

\problemdef{\sc List Semi-Acyclic $3$-Colouring}{a graph~$G$ and a function $L:V(G) \rightarrow \{S \; | \; S \subseteq \{1,2,3\}\}$.}{does~$G$ have a semi-acyclic $3$-colouring~$c$ such that $c(v) \in L(v)$ for all $v \in V(G)$?}

A graph~$G$ is near-bipartite if and only if $(G,L)$, with $L(v)=\{1,2,3\}$ for all $v \in V(G)$, is a \texttt{yes}-instance of {\sc List Semi-Acyclic $3$-Colouring}.
To recognise near-bipartite $P_5$-free graphs in polynomial time, we will show the stronger statement that {\sc List Semi-Acyclic $3$-Colouring} is polynomial-time solvable for $P_5$-free graphs.

A set of vertices in a graph~$G$ is {\em dominating} if every vertex of~$G$ is either in the set or has at least one neighbour in it.
We will use a lemma of Bacs\'o and Tuza.

\begin{lemma}[\cite{BT90}]\label{lem:P5free-clique-P3-dom}
Every connected $P_5$-free graph admits a dominating set that induces either a clique or a~$P_3$.
\end{lemma}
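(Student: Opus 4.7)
The plan is to follow Bacs\'o and Tuza's classical approach. Let $G$ be a connected $P_5$-free graph, and let $D$ be a minimum connected dominating set of $G$; such a $D$ exists because $G$ is connected. I will prove the stronger statement that $|D| \leq 3$. Once this is established, the lemma is immediate: a connected graph on at most three vertices is $K_1$, $K_2$, $K_3$, or $P_3$, each of which is either a clique or an induced $P_3$.

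To bound $|D|$, I use the minimality of $D$: for every $v \in D$, the set $D \setminus \{v\}$ fails to be a connected dominating set, so either $v$ is a cut vertex of $G[D]$ or $v$ has a \emph{private neighbour}, i.e., a vertex $w \in V(G) \setminus D$ with $N(w) \cap D = \{v\}$. An endpoint $p$ of a shortest path realising the diameter of $G[D]$ is never a cut vertex of $G[D]$: any vertex separated from the far endpoint upon removing $p$ would have distance exceeding the diameter of $G[D]$, a contradiction. Hence every such endpoint $p$ admits a private neighbour outside $D$.

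Assume for contradiction $|D| \geq 4$. Let $p_0 p_1 \cdots p_d$ be a shortest path realising the diameter of $G[D]$; this path is automatically induced in $G$. If $d \geq 4$, then $p_0 p_1 p_2 p_3 p_4$ is an induced $P_5$ in $G$, contradicting $P_5$-freeness. If $d = 3$, let $w_0$ be a private neighbour of $p_0$; the chords $w_0 p_i$ for $i \geq 1$ are excluded by the definition of private neighbour, and the chords among the $p_i$ are excluded because the path is shortest, so $w_0 p_0 p_1 p_2 p_3$ is an induced $P_5$, again a contradiction. So $d \leq 2$. In this last regime, I plan to append the private neighbours $w_0$ of $p_0$ and $w_d$ of $p_d$ to the diametral path and use the extra vertex forced by $|D| \geq 4$ to complete an induced $P_5$ in $G$.

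The main obstacle lies in this last regime $d \leq 2$: appending private neighbours alone yields only an induced $P_4$, and extending further requires a careful analysis of how any additional vertex $q \in D$ interacts with the $p_i$ and with the private neighbours, and in particular of the potential chord $w_0 w_d$ (when $d = 2$), which would close a short cycle instead of producing an induced $P_5$. I plan to rule out such a chord by a minimality argument: if $w_0 w_d \in E(G)$, then replacing interior vertices of $D$ by suitable private neighbours would yield a strictly smaller connected dominating set, contradicting the minimality of $D$. Once this chord is excluded, combining the extra vertex of $D$ with the appended private neighbours produces the required induced $P_5$, closing the contradiction and completing the proof.
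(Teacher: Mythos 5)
The central claim of your plan --- that a minimum connected dominating set $D$ of a connected $P_5$-free graph satisfies $|D|\leq 3$ --- is false, so the proof cannot be completed along these lines. Take $G$ to be a $K_4$ on vertices $k_1,k_2,k_3,k_4$ with a pendant vertex $p_i$ attached to each $k_i$. This graph is $P_5$-free: a pendant vertex can only be an endpoint of an induced path, and an induced path contains at most two clique vertices, so no induced path has more than four vertices. Yet the sets $\{k_i,p_i\}$ are pairwise disjoint and each must be hit by any dominating set, so every dominating set has size at least~$4$; the unique minimum connected dominating set is the $K_4$ itself. This also shows exactly where your argument breaks: you are in the regime $d\leq 2$ (here $G[D]$ is a clique, $d=1$), the appended private neighbours give only the induced $P_4$ $p_1k_1k_2p_2$, and the ``extra vertex'' $k_3\in D$ is adjacent to both middle vertices $k_1$ and $k_2$, so it cannot extend this $P_4$ to an induced $P_5$. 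No contradiction is available because there is none to be had.

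The underlying issue is that the Bacs\'o--Tuza theorem does not bound the size of the dominating set: it asserts the existence of a dominating set inducing a clique \emph{of arbitrary size} or a $P_3$. The bound of three that the paper uses downstream is obtained only by combining the lemma with $3$-colourability (hence $K_4$-freeness) of the graphs under consideration; it is not part of the lemma itself and cannot be derived for general $P_5$-free graphs. A correct proof must instead argue that if no dominating clique exists then some dominating induced $P_3$ does, which requires a genuinely different (and more delicate) analysis than shrinking a minimum connected dominating set. Your treatment of the cases $d\geq 3$ is fine as far as it goes, but the heart of the theorem lives precisely in the small-diameter case you deferred.
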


\noindent
Lemma~\ref{lem:P5free-clique-P3-dom} implies that every connected $3$-colourable $P_5$-free graph has a dominating set of size at most~$3$ (since it has no clique on more than three vertices).
This was used by Randerath et~al.~\cite{RST02} to show that {\sc $3$-Colouring} is polynomial-time solvable on $P_5$-free graphs.
Their algorithm tries all possible $3$-colourings of a dominating set of size at most~$3$.
It then adjusts the lists of the other vertices (which were originally set to $\{1,2,3\}$) to lists of size at most~$2$.
As shown by Edwards~\cite{Ed86}, {\sc $2$-{List Colouring}} can be translated to an instance of {\sc $2$-Satisfiability}, which is well known and readily seen to be solvable in linear time.
Hence this approach results in a polynomial (even constant) number of instances of the {\sc $2$-Satisfiability} problem.

Our goal is also to apply Lemma~\ref{lem:P5free-clique-P3-dom} on a connected $P_5$-free graph~$G$ and to reduce an instance $(G,L)$ of {\sc List Semi-Acyclic $3$-Colouring} to a polynomial number of instances of {\sc $2$-Satisfiability}.
However, in our case this is less straightforward than in the case of {\sc $3$-Colouring} restricted to $P_5$-free graphs: the restriction of {\sc List Semi-Acyclic $3$-Colouring} to lists of size~$2$ turns out to be \NP-complete for general graphs even if every list consists of either colours~$1$ and~$3$ or only colour~$2$.

\begin{sloppypar}
\begin{theorem}\label{t-2sat}
{\sc List Semi-Acyclic $3$-Colouring} is \NP-complete even if $L(v)\in \{\{1,3\},\{2\}\}$ for every vertex~$v$ in the input graph.
\end{theorem}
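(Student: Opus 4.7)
The plan is to reduce from \textsc{3-Satisfiability}. Membership in \NP\ is immediate, since given a candidate colouring one can verify properness and that the colour-$2$/colour-$3$ vertices induce a forest in polynomial time. For hardness, my task is to build, from a $3$-CNF formula $\phi$ with variables $x_1,\dots,x_n$ and clauses $C_1,\dots,C_m$, a graph $G$ and a list assignment $L\colon V(G)\to\{\{1,3\},\{2\}\}$ such that $(G,L)$ has a valid list semi-acyclic $3$-colouring if and only if $\phi$ is satisfiable. Throughout, I adopt the convention that a literal $\ell$ is \emph{true} exactly when its representative vertex receives colour~$1$.

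For each variable $x_i$, I introduce two vertices $x_i,\overline{x_i}$ with list $\{1,3\}$ joined by an edge, so that any proper list colouring forces them to opposite colours and assigning colour~$1$ to $x_i$ corresponds to setting $x_i$ to true. For each occurrence of a literal $\ell_j$ in clause $C_k$, let $v(\ell_j)\in\{x_{i_j},\overline{x_{i_j}}\}$ be its representative vertex. I attach to $v(\ell_j)$ a private $2$-edge path $v(\ell_j) - b_{k,j} - a_{k,j}$, where $b_{k,j}$ and $a_{k,j}$ have list $\{1,3\}$; a short check (the middle vertex must differ from both endpoints, yet both endpoints lie in $\{1,3\}$) shows that such a path forces $a_{k,j}$ and $v(\ell_j)$ to receive the same colour. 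Finally, for each clause I add three fresh \emph{witness} vertices $y_{k,1},y_{k,2},y_{k,3}$ with list $\{2\}$, joined to the copy vertices so as to form the $6$-cycle $a_{k,1} - y_{k,1} - a_{k,2} - y_{k,2} - a_{k,3} - y_{k,3} - a_{k,1}$.

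The correctness argument hinges on analysing which edges survive in $G[I\cup V_3]$, where $I$ is the set of colour-$2$ vertices (exactly the $y_{k,j}$) and $V_3$ the colour-$3$ class. I will verify that every variable edge and every copy-path edge joins a colour-$1$ vertex to a colour-$3$ vertex: since $v(\ell_j)$ and $a_{k,j}$ agree in colour, $b_{k,j}$ takes the opposite colour, so none of these edges survive. Thus the only surviving edges in $G[I\cup V_3]$ are clause $6$-cycle edges incident to some $a_{k,j}$ lying in $V_3$, which by the copy constraint happens exactly when the corresponding literal $\ell_j$ is false. Because the copy and witness vertices are private to each clause, the clause gadgets are vertex-disjoint in $G[I\cup V_3]$, and each one appears as the $6$-cycle with those $a_{k,j}$ deleted for which $\ell_j$ is true. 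Deleting any vertex from a $6$-cycle leaves a forest, so the $6$-cycle survives intact precisely when all three literals of the clause are false. Hence $G[I\cup V_3]$ is a forest if and only if every clause has at least one true literal, i.e.\ $\phi$ is satisfiable.

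The step I expect to require the most care is this cycle analysis. A naive clause gadget that directly joins the $6$-cycle to the shared literal vertices $v(\ell_j)$ creates spurious short cycles whenever two clauses happen to share a pair of literals, via the two literal vertices together with their two parallel witness connectors. The private copy paths are the key fix: a length-$2$ path with $\{1,3\}$-lists forces equal endpoint colours \emph{while putting a colour-$1$ vertex between them}, so the whole copy path is erased from $G[I\cup V_3]$ and each clause gadget is cleanly decoupled from the rest of the graph.
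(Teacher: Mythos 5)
Your reduction is correct and is essentially the same construction as the paper's: clause gadgets are even cycles alternating between $\{1,3\}$-list and $\{2\}$-list vertices (which survive in the colour-$2$/colour-$3$ subgraph exactly when every literal is false), and consistency between occurrences is enforced by length-$2$ paths with $\{1,3\}$ lists acting as equality gadgets whose middle vertex guarantees no spurious surviving cycle. The only differences are cosmetic (reducing from \textsc{3-Satisfiability} rather than \textsc{Satisfiability}, and using dedicated literal vertices $x_i,\overline{x_i}$ instead of designating one occurrence as canonical).
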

\end{sloppypar}

\begin{proof}
The problem is clearly in \NP\ so we need only show that it is \NP-hard.
We do this by reduction from {\sc Satisfiability}.

Let~$\phi$ be an instance of {\sc Satisfiability}.
Note that we can assume that, for each variable~$x$, $\phi$ contains both the literals~$x$ and~$\overline{x}$, and that each clause contains more than one literal (otherwise in polynomial-time we can obtain another smaller instance whose satisfiability is the same as that of~$\phi$).
We create an instance $(G,L)$ of {\sc List Semi-Acyclic $3$-Colouring} as follows (see also \figurename~\ref{fig:NPC-GL}):
\begin{figure}
\begin{center}
\scalebox{0.90}{
\tikzstyle{vertex}=[circle,draw=black, fill=black, minimum size=5pt, inner sep=1pt]
\tikzstyle{edge} =[draw,black]
\begin{tikzpicture}[scale=1]

   \foreach \pos/\name / \label / \posn / \dist in {
{(30:1.2)/c11/$\{2\}$/{left}/0},
{(150:1.2)/c12/$\{2\}$/{right}/0},
{(270:1.5)/c13/$\{2\}$/{below}/0}}
       { \node[vertex] (\name) at \pos {};
       \node [\posn=\dist, align=center] at (\name) {\label};
       }

   \foreach \pos/\name / \label / \posn / \dist / \list / \lposn / \ldist in {
{(90:1.5)/x11/$\overline{x_1}$/{below}/1.7/$\{1,3\}$/{above}/0},
{(210:1.2)/x21/$\{1,3\}$/{right}/0/$v_{x_2}$/{below left}/-1.5},
{(330:1.2)/x31/$v_{x_3}$/{above right}/0/$\{1,3\}$/{left}/0}}
       { \node[vertex] (\name) at \pos {};
       \node [\posn=\dist, align=center] at (\name) {\label};
       \node [\lposn=\ldist, align=center] at (\name) {\list};
       }

\foreach \source/ \dest  in {x11/c11, x11/c12, x21/c13, x21/c12, x31/c11, x31/c13}
       \path[edge, black,  very thick] (\source) --  (\dest);

\begin{scope}[xshift=3.5cm]

   \foreach \pos/\name / \label / \posn / \dist in {
{(30:1.2)/c21/$\{2\}$/{left}/0},
{(150:1.2)/c22/$\{2\}$/{right}/0},
{(270:1.5)/c23/$\{2\}$/{below}/0}}
       { \node[vertex] (\name) at \pos {};
       \node [\posn=\dist, align=center] at (\name) {\label};
       }

   \foreach \pos/\name / \label / \posn / \dist / \list / \lposn / \ldist in {
{(90:1.5)/x12/$v_{x_1}$/{below}/1.7/$\{1,3\}$/{above}/0},
{(210:1.2)/x32/$\overline{x_3}$/{below left}/0/$\{1,3\}$/{right}/0},
{(330:1.2)/x42/$v_{x_4}$/{below right}/0/$\{1,3\}$/{left}/0}}
       { \node[vertex] (\name) at \pos {};
       \node [\posn=\dist, align=center] at (\name) {\label};
       \node [\lposn=\ldist, align=center] at (\name) {\list};
       }

\end{scope}

\foreach \source/ \dest  in {x12/c21, x12/c22, x32/c23, x32/c22, x42/c21, x42/c23}
       \path[edge, black,  very thick] (\source) --  (\dest);

 \path[edge, red, very thick] (x11) --  (x12);
 \path[edge, red, very thick] (x31) --  (x32);

\begin{scope}[xshift=7.0cm]
   \foreach \pos/\name / \label / \posn / \dist in {
{(30:1.2)/c31/$\{2\}$/{left}/0},
{(150:1.2)/c32/$\{2\}$/{right}/0},
{(270:1.5)/c33/$\{2\}$/{below}/0}}
       { \node[vertex] (\name) at \pos {};
       \node [\posn=\dist, align=center] at (\name) {\label};
       }

   \foreach \pos/\name / \label / \posn / \dist / \list / \lposn / \ldist in {
{(90:1.5)/x13/${x_1}$/{below}/1.7/$\{1,3\}$/{above}/0},
{(210:1.2)/x43/$\overline{x_4}$/{below left}/0/$\{1,3\}$/{right}/0},
{(330:1.2)/x53/$\overline{x_5}$/{below right}/-0.5/$\{1,3\}$/{left}/0}}
       { \node[vertex] (\name) at \pos {};
       \node [\posn=\dist, align=center] at (\name) {\label};
       \node [\lposn=\ldist, align=center] at (\name) {\list};
       }
\end{scope}

\node[vertex] (mid1) at ($ (x12) !.5! (x13) $) {};
       \node [above, align=center] at (mid1) {$\{1,3\}$};

\foreach \source/ \dest  in {x13/c31, x13/c32, x43/c33, x43/c32, x53/c31, x53/c33}
       \path[edge, black,  very thick] (\source) --  (\dest);

  \path[edge, red, very thick] (x42) --  (x43);
  \path[edge, blue, very thick] (x12) --  (mid1);
  \path[edge, blue, very thick] (x13) --  (mid1);

\begin{scope}[xshift=10.5cm]
   \foreach \pos/\name / \label / \posn / \dist in {
{(30:1.2)/c41/$\{2\}$/{left}/0},
{(150:1.2)/c42/$\{2\}$/{right}/0},
{(270:1.5)/c43/$\{2\}$/{left}/0}}
       { \node[vertex] (\name) at \pos {};
       \node [\posn=\dist, align=center] at (\name) {\label};
       }

   \foreach \pos/\name / \label / \posn / \dist / \list / \lposn / \ldist in {
{(90:1.5)/x24/$\overline{x_2}$/{below}/1.7/$\{1,3\}$/{above}/0},
{(210:1.2)/x54/$v_{x_5}$/{below left}/-1.5/$\{1,3\}$/{right}/0},
{(330:1.2)/x34/${x_3}$/{right}/0/$\{1,3\}$/{left}/0}}
       { \node[vertex] (\name) at \pos {};
       \node [\posn=\dist, align=center] at (\name) {\label};
       \node [\lposn=\ldist, align=center] at (\name) {\list};
       }
\end{scope}

\foreach \source/ \dest  in {x24/c41, x24/c42, x54/c43, x54/c42, x34/c41, x34/c43}
       \path[edge, black,  very thick] (\source) --  (\dest);

  \path[edge, red, very thick] (x54) --  (x53);

\path[edge, red, very thick] (x21) to[out=135,in=190] (-0.3,2.5) to[out=0,in=160]	(x24);

\node[vertex] (mid2) at (5.25,-2.4) {};
       \node [below, align=center] at (mid2) {$\{1,3\}$};

\path[edge, blue, very thick] (x31) .. controls (2,-2) and (4,-2.4) .. (mid2);

\path[edge, blue, very thick] (x34) .. controls (13,-2.2) and (6.5,-2.4) .. (mid2);

\end{tikzpicture}
}
\end{center}
\caption{The instance $(G,L)$ of {\sc List Semi-Acyclic $3$-Colouring} formed from the following instance of {\sc Satisfiability}: $(\overline{x_1} \vee x_2 \vee x_3) \wedge (x_1 \vee \overline{x_3} \vee x_4) \wedge (x_1 \vee \overline{x_4} \vee \overline{x_5}) \wedge (\overline{x_2} \vee {x_3} \vee x_5)$.
The list for each vertex is displayed and literals label the vertices that represent them except that, for each variable~$x$, one vertex is labelled~$v_x$.}\label{fig:NPC-GL}
\end{figure}
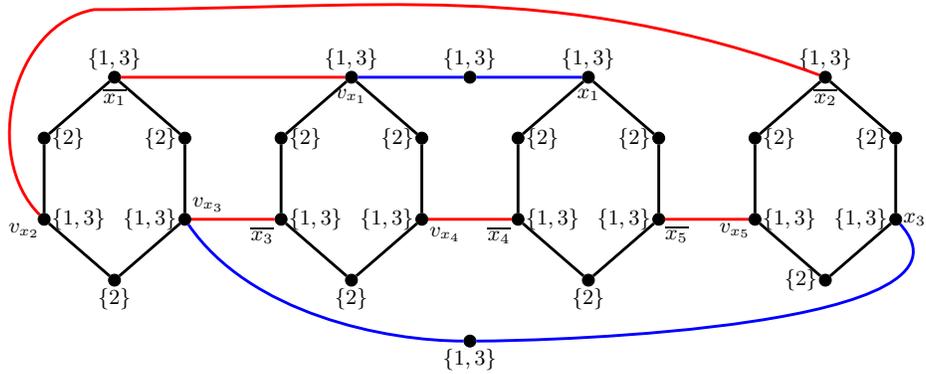
\begin{itemize}
\item For each clause~$C$ of~$\phi$ create a $(2|C|)$-cycle and assign lists $\{1,3\}$ and~$\{2\}$ alternately to vertices around the cycle.
Let the literals of the clause be represented by distinct vertices with lists $\{1,3\}$.
\item For each variable~$x$, choose a clause containing the positive literal~$x$ and let~$v_x$ be the vertex representing~$x$ in the corresponding cycle.
For every other occurrence (if there are any) of the positive literal~$x$, let the corresponding vertex be adjacent to a new \emph{middle} vertex that is also joined by an edge to~$v_x$.
Assign the list $\{1,3\}$ to the middle vertex.
For every occurrence of the negative literal~$\overline{x}$, add an edge so that the corresponding vertex is adjacent to~$v_x$.
\end{itemize}
We claim that $(G,L)$ is a yes-instance of {\sc List Semi-Acyclic $3$-Colouring} if and only if~$\phi$ is a \texttt{yes}-instance of {\sc Satisfiability}.

First suppose that~$G$ has a semi-acyclic $3$-colouring that respects~$L$, and let us show that a satisfying assignment for~$\phi$ can be found.
For each variable~$x$, if~$v_x$ is coloured~$1$, then let~$x$ be true; if it is coloured~$3$, let~$x$ be false.
Note that every other vertex corresponding to the positive literal~$x$ must be coloured the same as~$x$, and every vertex corresponding to an instance of~$\overline{x}$ is coloured differently, so each literal is coloured~$1$ if and only if it is true.
Thus every clause contains a true literal, otherwise in the corresponding cycle every vertex would be coloured~$2$ or~$3$ and the colouring would not be semi-acyclic.

Now suppose that~$\phi$ has a satisfying assignment.
If a literal is true in this assignment, colour the corresponding vertex~$1$, otherwise colour it~$3$.
Colour each middle vertex with the colour not used on its neighbours (which must be coloured alike).
Clearly this is a $3$-colouring, and each cycle corresponding to a clause contains a vertex coloured~$1$ as it contains a true literal.
No other cycle in the graph is coloured with only~$2$ and~$3$, as the only edges that do not belong to the cycles representing the clauses each join a vertex coloured~$1$ to a vertex coloured~$3$.
Thus the colouring is semi-acyclic.\qedllncs
\end{proof}

By Theorem~\ref{t-2sat}, to prove that {\sc List Semi-Acyclic $3$-Colouring} is polynomial-time solvable on $P_5$-free graphs, we need to refine our analysis and exploit $P_5$-freeness beyond the use of Lemma~\ref{lem:P5free-clique-P3-dom}.
We adapt the approach used by Ho{\`{a}}ng et~al.~\cite{HKLSS10} to show that {\sc $k$-Colouring} is polynomial-time solvable on $P_5$-free graphs for all $k\geq 3$ (extending the analogous result of Randerath et~al.~\cite{RST02} for {\sc $3$-Colouring}).
Let us first outline the proof of~\cite{HKLSS10}.

The approach of Ho{\`{a}}ng et~al.~\cite{HKLSS10} to solve {\sc $k$-Colouring} for $P_5$-free graphs for any integer~$k$ uses Lemma~\ref{lem:P5free-clique-P3-dom} as a starting point, just as the approach of Randerath et~al.~\cite{RST02} does for the $k=3$ case.
Lemma~\ref{lem:P5free-clique-P3-dom} implies that every $k$-colourable $P_5$-free graph~$G$ has a dominating set~$D$ of size at most~$k$ (as the clique number of a $k$-colourable graph is at most~$k$).
Fix an ordering $D=\{v_1,\ldots,v_{|D|}\}$.
Then decompose the set of vertices not in~$D$ into~$|D|$ ``layers'' so that the vertices in a layer~$i$ are adjacent to~$v_i$ (and possibly to~$v_j$ for $j>i$) but not to any~$v_h$ with $h<i$.
Using the $P_5$-freeness of~$G$ to analyse the adjacencies between different layers, it is possible to branch in such a way that a polynomial number of instances of {\sc $(k-1)$-Colouring} are obtained.
Hence, by repetition, a polynomial number of instances of {\sc $2$-Colouring} are reached, which can each be solved in polynomial time due to the result of~\cite{Ed86}.

The algorithm of~\cite{HKLSS10} works by considering the more general {\sc List $k$-Colouring} problem, where each vertex~$v$ is assigned a list $L(v) \subseteq \{1,\ldots,k\}$ of permitted colours and the question is whether there is a colouring in which each vertex is assigned a colour from its list.
The algorithm immediately removes any vertices whose lists have size~$1$ at any point (and then adjusts the lists of admissible colours of all neighbours of such vertices).
We will follow the approach of~\cite{HKLSS10}.
In our case $k=3$, but we cannot remove any vertices whose lists contain a singleton colour if this colour is~$2$ or~$3$.
To overcome this extra complication we carefully analyse the $4$-vertex cycles in the graph after observing that these cycles are the only obstacles that may prevent a $3$-colouring of a $P_5$-free graph from being semi-acyclic.

For a subset $S\subseteq V(G)$ of the vertex set of a graph~$G$, we let~$G[S]$ denote the subgraph of~$G$ induced by~$S$.

\begin{theorem}\label{t-P5free-list-semi-acyclic}
{\sc List Semi-Acyclic $3$-Colouring} is solvable on $P_5$-free graphs in $O(n^{16})$ time.
\end{theorem}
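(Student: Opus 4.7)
The plan is to extend the polynomial-time algorithm of Ho{\`a}ng et al.~\cite{HKLSS10} for {\sc List $3$-Colouring} on $P_5$-free graphs, adding machinery to enforce the semi-acyclic constraint. We may assume $G$ is connected, since {\sc List Semi-Acyclic $3$-Colouring} decomposes over components, and we may assume $G$ is $K_4$-free (this is testable in $O(n^4)$ time and if $G$ contains a $K_4$ then no proper $3$-colouring, let alone a semi-acyclic one, exists). Applying Lemma~\ref{lem:P5free-clique-P3-dom}, $G$ has a dominating set~$D$ that induces either a clique or a~$P_3$; since $G$ is $K_4$-free, $|D| \leq 3$. We branch over the $O(1)$ colourings of~$D$ that respect~$L$, and after each such guess every vertex of $V(G) \setminus D$ has a neighbour in~$D$, so each list shrinks to size at most~$2$. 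We then follow the layer-decomposition-and-branch scheme of~\cite{HKLSS10}: partition $V(G) \setminus D$ into layers $L_1, \dots, L_{|D|}$ according to the first dominator, and repeatedly branch on structural choices (propagating forced colourings throughout) using $P_5$-freeness to bound the number of branches.

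The key structural observation that lets us handle semi-acyclicity within this framework is the following: if $(V_1, V_2, V_3)$ is a proper $3$-colouring of a $P_5$-free graph~$G$, then $G[V_2 \cup V_3]$ is bipartite and $P_5$-free, hence any cycle in it contains a~$C_4$. Indeed, an induced cycle of length at least~$6$ would contain an induced $P_5$; a longer non-induced cycle has a chord that (since the graph is bipartite) splits it into two shorter bipartite cycles, one of length at least~$4$, and we iterate. So a proper $3$-colouring of~$G$ is semi-acyclic if and only if no $4$-cycle of~$G$ is entirely coloured from $\{2,3\}$; equivalently, at least one vertex of each~$C_4$ is coloured~$1$. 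Thus on top of the list-colouring constraints we must impose, for every $4$-cycle $v_1 v_2 v_3 v_4$ of~$G$, a clause of the form $c(v_1){=}1 \vee c(v_2){=}1 \vee c(v_3){=}1 \vee c(v_4){=}1$.

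The main obstacle is incorporating these $C_4$-constraints, because with lists of size at most~$2$ each becomes a clause on up to four boolean variables, falling outside {\sc $2$-SAT}; Theorem~\ref{t-2sat} shows that the extension is $\NP$-hard in full generality, so we must exploit $P_5$-freeness more aggressively. The plan is to insert an additional layer of branching on a carefully chosen constant-size set of ``pivot'' vertices per layer, guessing for each pivot whether it receives colour~$1$; $P_5$-freeness together with the layer structure of~\cite{HKLSS10} should force every problematic~$C_4$ to meet at least one such pivot, so that after the guess each surviving $C_4$-clause involves at most two still-undetermined vertices and reduces to a $2$-clause. The hard part, which will require the bulk of the case analysis, is showing that an $O(1)$-size pivot set per layer genuinely suffices to resolve every $4$-cycle, because this is where the $P_5$-freeness hypothesis must do all the work that fails in Theorem~\ref{t-2sat}. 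Once this is established, each of the polynomially many resulting instances reduces to pure {\sc $2$-SAT} via the encoding of Edwards~\cite{Ed86} and is solvable in linear time. Summing the polynomial branching factors yields the claimed $O(n^{16})$ running time.
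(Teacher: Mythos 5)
Your setup (connectedness, $K_4$-freeness, the Bacs\'o--Tuza dominating set from Lemma~\ref{lem:P5free-clique-P3-dom}, branching on its colouring so that all lists shrink to size at most two) matches the paper, and your observation that in a $P_5$-free graph a proper $3$-colouring is semi-acyclic if and only if every induced $C_4$ contains a vertex coloured $1$ is exactly the reduction the paper uses. But the proof has a genuine gap at precisely the point you yourself flag as ``the hard part'': you never establish that a constant-size set of pivot vertices per layer can be chosen so that every problematic $C_4$ meets a pivot, and this is in fact not the mechanism that makes the argument work. A $P_5$-free graph can contain $\Omega(n^2)$ induced $C_4$s with no small transversal, so there is no reason a constant number of guessed vertices per layer should intersect all of them; the paper does not attempt anything of this kind.

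What the paper actually does to turn the $4$-clauses into $2$-clauses is different and constitutes the real content of the theorem. First, for each pair of the (at most six) independent sets obtained by bipartitioning the vertices with full-size lists in each layer, $P_5$-freeness forces the bipartite graph between the two sets to be $2P_2$-free, i.e.\ a chain graph; ordering one side by neighbourhood inclusion, one branches over the $n+1$ choices of the first vertex to receive a designated third colour. This is an $O(n)$-way branch per pair, $O(n^{12})$ overall --- polynomial but emphatically not constant --- and after it there are \emph{no edges at all} between distinct sets of vertices whose size-two lists differ. Second, with that structure in hand one argues that any induced $C_4$ that could still be coloured using only $\{2,3\}$ has \emph{exactly two} vertices whose lists contain colour $1$, and that these two vertices are non-adjacent: two adjacent such vertices would have equal two-element lists, so one of them would necessarily get colour $1$ in any proper list colouring, meaning the $C_4$ was never problematic. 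Only then does each $C_4$-constraint become a genuine $2$-clause $(u_1\vee v_1)$ and the whole instance an instance of \textsc{$2$-Satisfiability} solvable via~\cite{Ed86}. Your proposal contains neither the chain-graph branching nor the two-good-vertices argument, and without some substitute for them the reduction to \textsc{$2$-Satisfiability} --- which Theorem~\ref{t-2sat} shows cannot be had for free --- is not achieved; the claimed $O(n^{16})$ bound is likewise unsupported because the branching structure is never pinned down.
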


\begin{proof}
\setcounter{ctrclaim}{0}
Consider an input $(G,L)$ for the problem such that~$G$ is $P_5$-free.
Since the problem can be solved component-wise, we may assume that~$G$ is connected.
If~$G$ contains a~$K_4$, then it is not $3$-colourable and the input is a \texttt{no}-instance.
As we can test whether or not~$G$ contains a~$K_4$ in $O(n^4)$ time, we now assume that~$G$ is $K_4$-free.
We may also assume that~$G$ contains at least three vertices, otherwise the problem can be trivially solved.

For $i \in \{1,2,3\}$ let $G_i=G[\{v \in V(G) \; | \; i \notin L(v)\}]$.
We apply the following propagation rules exhaustively, and, later in the proof, every time we branch on possibilities, we assume that these rules are again applied exhaustively immediately afterwards.

\begin{enumerate}[\bf Rule 1.]
\item \label{rule:nbhd-of-coloured-vertices}If $u,v \in V(G)$ are adjacent and $|L(u)|=1$, set $L(v) := L(v) \setminus L(u)$.
\item \label{rule:no-empty-lists}If $L(v)=\emptyset$ for some $v \in V(G)$, return \texttt{no}.
\item \label{rule:two-colours-bipartite}If~$G_i$ is not bipartite for some $i \in \{1,2,3\}$, return \texttt{no}.
\item \label{rule:no-2-3-C_4s}If~$G_1$ contains an induced~$C_4$, return \texttt{no}.
\item \label{rule:sort-easy-C_4s}If~$G$ contains an induced~$C_4$, and exactly one vertex~$v$ of this cycle has a list containing the colour~$1$, set $L(v)=\{1\}$.
\end{enumerate}

\noindent
We must show that these rules are safe.
That is, that when they modify the instance they do not affect whether or not it is a \texttt{yes}-instance or a \texttt{no}-instance, and when they return the answer \texttt{no}, this is correct and no semi-acyclic colouring that respects the lists can exist.
This is trivial for Rules~\ref{rule:nbhd-of-coloured-vertices} and~\ref{rule:no-empty-lists}.
We may apply Rule~\ref{rule:two-colours-bipartite} since in any $3$-colouring of~$G$ every pair of colour classes must induce a bipartite graph.
We may apply Rules~\ref{rule:no-2-3-C_4s} and~\ref{rule:sort-easy-C_4s} since in every solution, every induced~$C_4$ must contain at least one vertex coloured with colour~$1$.
In fact, if there is a $3$-colouring of~$G$ with a cycle made of vertices coloured only~$2$ and~$3$, then this cycle must be an even cycle.
Since~$G$ is $P_5$-free, such a cycle must in fact be isomorphic to~$C_4$.
Hence the problem, when restricted to $P_5$-free graphs, is equivalent to testing whether~$G$ has a $3$-colouring respecting the lists such that every induced~$C_4$ contains at least one vertex coloured with colour~$1$.

By Lemma~\ref{lem:P5free-clique-P3-dom}, $G$ has a dominating set~$S$ that is either a clique or induces a~$P_3$.
If~$S$ is a clique, then it has at most three vertices, as~$G$ is $K_4$-free, so we can find such a set in~$O(n^4)$ time.
Thus, adding vertices arbitrarily if necessary, we may assume $S=\{a_1,a_2,a_3\}$.
We consider all possible combinations of colours that can be assigned to the vertices in~$S$, that is, we branch into at most~$3^3$ cases, in which $a_1$, $a_2$ and~$a_3$ have each received a colour, or equivalently, have had their list of permissible colours reduced to size exactly~$1$.
In each case we proceed as follows.

Assume that $L(a_1)=\{c_1\}$, $L(a_2) = \{c_2\}$ and $L(a_3)=\{c_3\}$ and again apply the propagation rules above.
Partition the vertices of $V \setminus S$ into three parts $V_1$, $V_2$,~$V_3$: let~$V_1$ be the set of neighbours of~$a_1$ in $V \setminus S$, let~$V_2$ be the set of neighbours of~$a_2$ in $V \setminus S$ that are not adjacent to~$a_1$, and let $V_3=V(G) \setminus (S \cup V_1 \cup V_2)$ (see also \figurename~\ref{fig:Vi}).
Each vertex in~$V_3$ is non-adjacent to~$a_1$ and~$a_2$, so it is adjacent to~$a_3$, as~$S$ is dominating.
For $i\in \{1,2,3\}$, if $v \in V_i$, then $L(v) \subseteq \{1,2,3\} \setminus \{c_i\}$ by Rule~\ref{rule:nbhd-of-coloured-vertices}, so each vertex has at most two colours in its list.
For $i \in \{1,2,3\}$ let~$V_i'$ be the subset of vertices~$v$ in~$V_i$ with $L(v) = \{1,2,3\} \setminus \{c_i\}$.
Recall that for $i \in \{1,2,3\}$, we defined $G_i=G[\{v \in V(G) \; | \; i \notin L(v)\}]$.
Since for every $i \in \{1,2,3\}$, every vertex of~$V_i$ belongs to~$G_{c_i}$, it follows that $V_1$, $V_2$ and~$V_3$ each induce a bipartite graph in~$G$ by Rule~\ref{rule:two-colours-bipartite}.
Therefore, we may partition each~$V_i'$ into two (possibly empty) independent sets~$V_i''$ and~$V_i'''$.

\begin{figure}[h]
\begin{center}
\tikzstyle{vertex}=[circle,draw=black, fill=black, minimum size=5pt, inner sep=1pt]
\tikzstyle{edge} =[draw,black]
\begin{tikzpicture}

\foreach \pos/\name/\number/\posl in {
(3,0)/V/3/below,
(3,2)/V/2/above,
(3,4)/V/1/below}
{
\begin{scope}[shift={\pos}]
\draw[rounded corners=10pt, thick, black!50!white] (0.5,-0.5) rectangle (3.5,0.5);
\node[vertex] (\name\number1) at (1,0) {};
\node[vertex] (\name\number2) at (2,0) {};
\node[vertex] (\name\number3) at (3,0) {};
\node at (3.7,0) {$\name_\number$};
\end{scope}
}

\draw[rounded corners=10pt, thick, black!50!white] (0.5,-1.5) rectangle (1.5,3.5);
\node at (1,3.7) {$S$};

\node[vertex] (v3) at (1,-1) {};
\node [left, align=center] at (v3) {$a_3$};
\node[vertex] (v2) at (1,1) {};
\node [left, align=center] at (v2) {$a_2$};
\node[vertex] (v1) at (1,3) {};
\node [left, align=center] at (v1) {$a_1$};

\foreach \numa in {1,2,3} {
\foreach \numb in {1,2,3} {
\path[edge, black, very thick] (v\numa) -- (V\numa\numb);
}
}

\foreach \numa/\numb in {1/2,1/3,2/3} {
\foreach \numc in {1,2,3} {
\path[edge, black, very thick, dashed] (v\numa) -- (V\numb\numc);
}
}
\end{tikzpicture}
\end{center}
\caption{The sets $S$, $V_1$, $V_2$ and~$V_3$.
Dashed lines indicate edges that are not present.
Edges that are not shown may or may not be present.
In particular, vertices in~$V_1$ can be adjacent to~$v_2$ or~$v_3$ and vertices in~$V_2$ can be adjacent to~$v_3$.}\label{fig:Vi}
\end{figure}
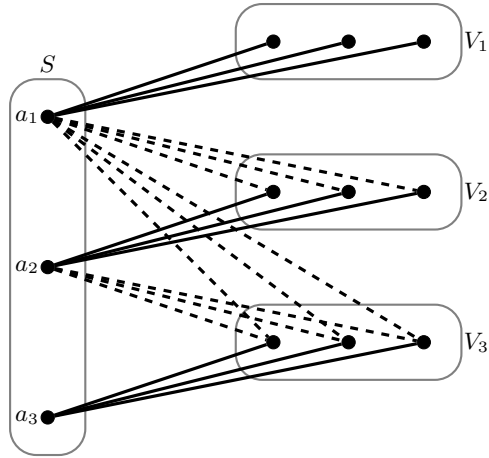

\begin{sloppypar}
Our strategy is to reduce the instance $(G,L)$ to a polynomial number of instances~$(G,L')$, in which there are no edges between any two distinct sets~$V_i'$ and~$V_j'$ (defined with respect to~$L'$).
We will do this by branching on possible partial colourings in such a way that afterwards there are no edges between~$V_i''$ and~$V_j'''$, no edges between~$V_i''$ and~$V_j''$ and no edges between~$V_i'''$ and~$V_j'''$ for every pair $i,j\in \{1,2,3\}$ with $i\neq j$.
As the branching procedure is similar for each of these possible combinations, we pick an arbitrary pair, namely~$V_1''$ and~$V_2''$.
As we shall see, we do not remove any edges between~$V_1''$ and~$V_2''$.
Instead, we decrease the lists of some of their vertices to size~$1$, so that these vertices will leave $V_1'\cup V_2'$ by definition of~$V_1'$ and~$V_2'$ (and therefore leave~$V_1''$ and~$V_2''$ by definition of~$V_1''$ and~$V_2''$).
\end{sloppypar}

Suppose that $G[V_1''\cup V_2'']$ contains an induced~$2P_2$ (see \figurename~\ref{fig:2P2}) with edges~$uu'$ and~$vv'$ for $u,v\in V_1''$ and $u',v'\in V_2''$.
Then $G[\{u',u,a_1,v,v'\}]$ is a~$P_5$, a contradiction.
It follows that $G[V_1''\cup V_2'']$ is a $2P_2$-free bipartite graph, that is, the edges between~$V_1''$ and~$V_2''$ form a chain graph, which means that the vertices of~$V_1''$ can be linearly ordered by inclusion of neighbourhood in~$V_2''$.
In other words, we fix an ordering $V_1''=\{u_1,\ldots,u_k\}$ such that $N_{V_2''}(u_1) \supseteq \cdots \supseteq N_{V_2''}(u_k)$.

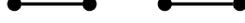
\begin{figure}
\begin{center}
\tikzstyle{vertex}=[circle,draw=black, fill=black, minimum size=5pt, inner sep=1pt]
\tikzstyle{edge} =[draw,black]
\begin{tikzpicture}
\foreach \num in {0,1,2,3} \node[vertex] (x0\num) at (\num,0) {};
\foreach \source/\dest in {x00/x01, x02/x03} \path[edge, black, very thick] (\source) -- (\dest);
\end{tikzpicture}
\end{center}
\caption{\label{fig:2P2}The graph~$2P_2$.}
\end{figure}

We choose an arbitrary colour $c' \in \{1,2,3\} \setminus \{c_1,c_2\}$.
Note that if $c_1 \neq c_2$ then this choice is unique and otherwise there are two choices (as we will show, it suffices to branch on only one choice).
Also note that every vertex in~$V_1''$ and~$V_2''$ has colour~$c'$ in its list.

We now branch over $k+1$ possibilities, namely the possibilities that vertex~$u_i$ is the first vertex coloured with colour~$c'$ (so vertices $u_1,\ldots,u_{i-1}$, if they exist, do not get colour~$c'$) and the remaining possibility that no vertex of~$V_1''$ is coloured with colour~$c'$.
To be more precise, for branch $i=1$ we set $L(u_1)=\{c'\}$, for each branch $2 \leq i \leq k$ we remove colour~$c'$ from each of $L(u_1),\ldots,L(u_{i-1})$ and set $L(u_i)=\{c'\}$ and for branch $i=k+1$ we remove colour~$c'$ from each of $L(u_1),\ldots,L(u_{k})$.
If $i=k+1$, all vertices of~$V_1''$ will have a unique colour in their list and thus leave~$V_1'$ and thus~$V_1''$ by definition of~$V_1'$.
Hence, $V_1''$ becomes empty and thus, as required, we no longer have edges between~$V_1''$ and~$V_2''$.
Otherwise, if $i \leq k$, then all of $u_1,\ldots,u_i$ will have a list containing exactly one colour, so they will leave~$V_1'$ and therefore~$V_1''$.
By Rule~\ref{rule:nbhd-of-coloured-vertices} all neighbours of~$u_i$ in~$V_2''$ will have~$c'$ removed from their lists, so they will leave~$V_2'$ and therefore~$V_2''$.
By the ordering of neighbourhoods of vertices in~$V_1''$, this means that no vertex remaining in~$V_1''$ has a neighbour remaining in~$V_2''$, so if $i\leq k$, then it is also the case that we no longer have edges between~$V_1''$ and~$V_2''$.

\medskip
\noindent
Note that removing all the edges between distinct sets~$V_i'$ and~$V_j'$ in the above way involves branching into~$O(n^{12})$ cases.
We consider each case separately, and for each case we proceed as below.

By the above branching we may assume that there are no edges between any two distinct sets~$V_i'$ and~$V_j'$.
We say that an induced~$C_4$ is {\em tricky} if there exists a (proper) colouring of it (not necessarily extendable to all of~$G$) using only the colours~$2$ and~$3$ such that every vertex receives a colour from its list.
We say that a vertex in an induced~$C_4$ is {\em good} for this induced~$C_4$ if its list contains the colour~$1$.
By definition of tricky, every good vertex for a tricky induced~$C_4$ must belong to $V_1' \cup V_2' \cup V_3'$.
By Rules~\ref{rule:no-2-3-C_4s} and~\ref{rule:sort-easy-C_4s}, every tricky induced~$C_4$ must contain at least two good vertices.
If an induced~$C_4$ contains two good vertices that are adjacent, then they must belong to the same set~$V_i'$ (since there are no edges between any two distinct sets~$V_i'$ and~$V_j'$), so they must have the same list.
This means that in every colouring of this induced~$C_4$ that respects the lists, one of the good vertices in this induced~$C_4$ will be coloured with colour~$1$, contradicting the definition of tricky.
We conclude that every tricky induced~$C_4$ must contain exactly two good vertices, which must be non-adjacent.

Suppose~$G$ contains a tricky induced~$C_4$ on vertices $v_1,v_2,v_3,v_4$, in that order, such that~$v_1$ and~$v_3$ are good.
Since the~$C_4$ is tricky, we must either have:

\begin{itemize}
\item $2 \in L(v_1)$, $3 \in L(v_2)$, $2 \in L(v_3)$ and $3 \in L(v_4)$ or
\item $3 \in L(v_1)$, $2 \in L(v_2)$, $3 \in L(v_3)$ and $2 \in L(v_4)$.
\end{itemize}

\noindent
Since~$v_2$ and~$v_4$ are not good, and there are no edges between distinct sets of the form~$V_i'$, the above implies that one of the following must hold:

\begin{itemize}
\item $L(v_1)=\{1,2\}$, $L(v_2)=\{3\}$, $L(v_3)=\{1,2\}$ and $L(v_4)=\{3\}$ or
\item $L(v_1)=\{1,3\}$, $L(v_2)=\{2\}$, $L(v_3)=\{1,3\}$ and $L(v_4)=\{2\}$.
\end{itemize}

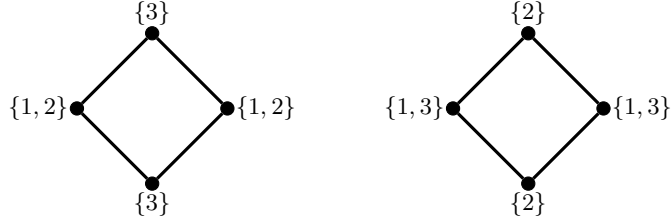
\begin{figure}
\begin{center}
\scalebox{1.0}{
\tikzstyle{vertex}=[circle,draw=black, fill=black, minimum size=5pt, inner sep=1pt]
\tikzstyle{edge} =[draw,black]
\begin{tikzpicture}
   \foreach \pos/\name / \label / \posn / \dist in {
{(0:1.0)/c11/$\{1,2\}$/{right}/0},
{(90:1.0)/c12/$\{3\}$/{above}/0},
{(180:1.0)/c13/$\{1,2\}$/{left}/0},
{(270:1.0)/c14/$\{3\}$/{below}/0}}
       { \node[vertex] (\name) at \pos {};
       \node [\posn=\dist, align=center] at (\name) {\label};
       }

\foreach \source/ \dest  in {c11/c12, c12/c13, c13/c14, c14/c11}
       \path[edge, black,  very thick] (\source) --  (\dest);

\begin{scope}[xshift=5cm]
   \foreach \pos/\name / \label / \posn / \dist in {
{(0:1.0)/c21/$\{1,3\}$/{right}/0},
{(90:1.0)/c22/$\{2\}$/{above}/0},
{(180:1.0)/c23/$\{1,3\}$/{left}/0},
{(270:1.0)/c24/$\{2\}$/{below}/0}}
       { \node[vertex] (\name) at \pos {};
       \node [\posn=\dist, align=center] at (\name) {\label};
       }

\foreach \source/ \dest  in {c21/c22, c22/c23, c23/c24, c24/c21}
       \path[edge, black,  very thick] (\source) --  (\dest);
\end{scope}

\end{tikzpicture}
}
\end{center}
\caption{Strongly tricky~$C_4$s.}\label{fig:strongly-tricky}
\end{figure}

\noindent
We say that an induced~$C_4$ is {\em strongly tricky} if its vertices have lists of this form (see also \figurename~\ref{fig:strongly-tricky}).
Note that, by the above arguments, we may assume that all tricky induced~$C_4$s in the instances we consider are in fact strongly tricky.

For $S \subsetneq \{1,2,3\}$, let~$L_S$ denote the set of vertices~$v$ with $L(v)=S$ (to simplify notation, we will write~$L_i$ instead of~$L_{\{i\}}$ and~$L_{i,j}$ instead of~$L_{\{i,j\}}$ wherever possible).
Note that for distinct sets $S,T \subseteq \{1,2,3\}$ with $|S|=|T|=2$, no vertex in~$L_S$ can have a neighbour in~$L_T$, because such vertices would be in different sets~$V_i'$, and therefore cannot be adjacent by our branching.
By Rule~\ref{rule:nbhd-of-coloured-vertices}, if $S \subsetneq T \subsetneq \{1,2,3\}$ with $|S|=1$ and~$|T|=2$, then no vertex in~$L_S$ can have a neighbour in~$L_T$.
From the above two arguments it follows that if a vertex is in $L_{1,2}$, $L_{2,3}$ or~$L_{1,3}$, then all its neighbours outside this set must be in $L_3$, $L_1$ or~$L_2$, respectively (see also \figurename~\ref{fig:Lsets}).

\begin{figure}
\begin{center}
\scalebox{1.0}{
\tikzstyle{vertex}=[circle,draw=black, fill=white, minimum size=2.4em, inner sep=1pt]
\tikzstyle{edge} =[draw,black]
\begin{tikzpicture}
\foreach \pos/\name / \label / \dist in {
{(90+0:1.0)/l1/$L_1$/0},
{(90+120:1.0)/l2/$L_2$/0},
{(90+240:1.0)/l3/$L_3$/0},
{(90+0:2.73205080757)/l23/$L_{2,3}$/0},
{(90+120:2.73205080757)/l13/$L_{1,3}$/0},
{(90+240:2.73205080757)/l12/$L_{1,2}$/0}}
{\node[vertex] (\name) at \pos {\label};}

\foreach \source/ \dest in {l1/l2, l2/l3, l1/l3, l1/l23, l2/l13, l3/l12} \path[edge, black, very thick] (\source) -- (\dest);

\end{tikzpicture}
}
\end{center}
\caption{The possible adjacencies of vertices in the sets~$L_S$ for $S \subseteq \{1,2,3\}$.
An edge is shown between two sets if and only if it is possible for a vertex in one of the sets to be adjacent to a vertex in the other.
Every vertex in the graph has a list of size either~$1$ or~$2$.}\label{fig:Lsets}
\end{figure}
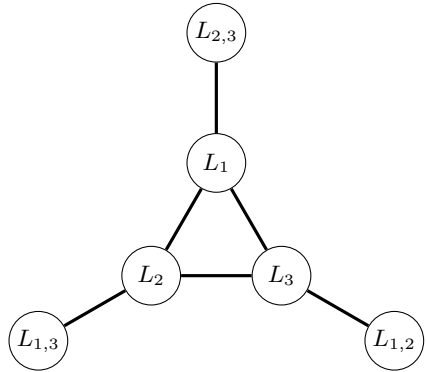

Recall that every tricky induced~$C_4$ is strongly tricky, and is therefore entirely contained in either $G[L_2 \cup L_{1,3}]$ or $G[L_3 \cup L_{1,2}]$.
By Rule~\ref{rule:two-colours-bipartite}, $G_1$ and therefore $G[L_{2,3}]$ is bipartite.
Hence we can colour the vertices of~$L_{2,3}$ with colours from their lists such that no vertex in~$L_{2,3}$ is adjacent to a vertex of the same colour in~$G$ and no induced~$C_4$s are coloured with colours alternating between~$2$ and~$3$ (indeed, recall that induced~$C_4$s cannot exist in~$G(L_{2,3})$ by Rule~\ref{rule:no-2-3-C_4s}).
It therefore remains to check whether the vertices of $G[L_2 \cup L_{1,3}]$ (and~$G[L_3 \cup L_{1,2}]$) can be coloured with colours from their lists so that no pair of adjacent vertices in~$L_{1,3}$ (resp. $L_{1,2}$) receive the same colour and every strongly tricky~$C_4$ has at least one vertex coloured~$1$.
By symmetry, it is sufficient to show how to solve the $G[L_2 \cup L_{1,3}]$ case. 
Hence we have reduced the original instance $(G,L)$ to a polynomial number of instances of a new problem, which we define below after first defining the instances.

\begin{definition}\label{d-def1}
A graph $G=(V,E)$ is {\em troublesome} if every vertex~$v$ in~$G$ has list either $L(v)=\{2\}$ or $L(v)=\{1,3\}$, such that~$L_2$ is an independent set and~$L_{1,3}$ induces a bipartite graph.
\end{definition}

\noindent
In particular, for each of our created instances the set~$L_2$ is independent due to Rule~\ref{rule:nbhd-of-coloured-vertices} and~$L_{1,3}$ induces a bipartite graph by Rule~\ref{rule:two-colours-bipartite}.
Note that by definition of troublesome, all tricky induced~$C_4$s in a troublesome graph are strongly tricky.

\begin{definition}\label{d-def2}
Let~$G$ be a troublesome graph.
A $3$-colouring of the graph~$G$ is {\em trouble-free} if each vertex receives a colour from its list, no two adjacent vertices of~$G$ are coloured alike and at least one vertex of every strongly tricky induced~$C_4$ of~$G$ receives colour~$1$.
\end{definition}

This leads to the following problem.

\problemdef{\sc Trouble-Free Colouring}{a troublesome $P_5$-free graph~$G$}{does~$G$ have a trouble-free colouring?}

\begin{sloppypar}
\noindent
We can encode an instance of {\sc Trouble-Free Colouring} as an instance of {\sc $2$-Satisfiability} as follows.
For each vertex $u \in L_{1,3}$, we create two variables~$u_{1}$ and~$u_{3}$.
If we assign~$u_1$ or~$u_3$ to be true, this means that~$u$ will be assigned colour~$1$ or~$3$, respectively.
Hence, we need to add the clauses $(u_1\vee u_3)$ and $(\overline{u_1}\vee \overline{u_3})$.
To ensure that two adjacent vertices are not coloured alike, for each pair of adjacent vertices $u,v \in L_{1,3}$ we add the clauses $(\overline{u_1} \vee \overline{v_1})$ and $(\overline{u_3} \vee \overline{v_3})$.
For each strongly tricky~$C_4$ with good vertices~$u$ and~$v$, we add the clause $(u_1 \vee v_1)$ to ensure that at least one of them will be assigned colour~$1$.
Let~${\cal I}$ be the resulting instance of {\sc $2$-Satisfiability}.
We claim that~$G$ has a trouble-free colouring if and only if~${\cal I}$ is satisfiable.
\end{sloppypar}

First suppose that~$G$ has a trouble-free colouring.
Then setting~$x_c$ to be true if the vertex~$x$ receives colour~$c$ gives a satisfying assignment for~${\cal I}$.
Now suppose that~${\cal I}$ has a satisfying assignment.
Then we colour the vertex $v \in L_{1,3}$ with colour~$c$ if~$v_c$ is set to true in this assignment.
For $v \in L_2$ we colour~$v$ with colour~$2$.
No two adjacent vertices of~$L_{1,3}$ are assigned the same colour, because~${\cal I}$ is satisfied.
No vertex of~$L_2$ is assigned the same colour as one of its neighbours, since~$L_2$ is an independent set and every vertex of~$L_{1,3}$ is assigned colour~$1$ or~$3$.
Therefore the obtained colouring is a $3$-colouring of~$G$.
Since~${\cal I}$ is satisfied, every strongly tricky~$C_4$ contains at least one vertex coloured~$1$.
Hence~$G$ has a trouble-free colouring.

\medskip
\noindent
So, by branching, we have reduced the original instance $(G,L)$ of {\sc List Semi-Acyclic $3$-Colouring} to a polynomial number of instances of {\sc $2$-Satisfiability}.
If we find that one of the instances of the latter problem is a \texttt{yes}-instance, then we obtain a corresponding \texttt{yes}-instance of {\sc Trouble-Free Colouring}.
We therefore solve {\sc Trouble-Free Colouring} on $G[L_2 \cup L_{1,3}]$ and (after swapping colours~$2$ and~$3$) on $G[L_3 \cup L_{1,2}]$.
If one of these two instances of {\sc Trouble-Free Colouring} is a \texttt{no}-instance, then we return \texttt{no} for this branch and try the next one.
If both of these are \texttt{yes}-instances, then we return \texttt{yes} and obtain a semi-acyclic $3$-colouring by combining the colourings on $G[L_1 \cup L_{2,3}]$, $G[L_2 \cup L_{1,3}]$ and (after swapping colours~$2$ and~$3$ back) $G[L_3 \cup L_{1,2}]$.
If every branch returns \texttt{no} then the original graph has no semi-acyclic $3$-colouring.
This completes the proof of the correctness of the algorithm and it remains to analyse its runtime.

Let~$n$ be the number of vertices in~$G$.
Recall that we can check if~$G$ is $K_4$-free in~$O(n^4)$ time and if it is then we can find a dominating set of size at most~$3$ in~$O(n^4)$ time.
Rule~\ref{rule:nbhd-of-coloured-vertices} can be applied in $O(n^2)$ time.
Rule~\ref{rule:no-empty-lists} can be applied in~$O(n)$ time.
Rule~\ref{rule:two-colours-bipartite} can be applied in~$O(n^2)$ time.
Rules~\ref{rule:no-2-3-C_4s} and~\ref{rule:sort-easy-C_4s} can be applied in~$O(n^4)$ time.
We first branch up to~$3^3$ times and then sub-branch~$O(n^{12})$ times and in each case we apply the rules.
It is readily seen that every created instance of {\sc $2$-Satisfiability} is solvable in~$O(n^2)$ time (see also Edwards~\cite{Ed86}).
This leads to a total runtime of $O(n^4)+O(1)\times O(n^{12}) \times (O(n^4) + O(n^2)) =O(n^{16})$.\qedllncs
\end{proof}

As mentioned, Theorem~\ref{t-P5free-list-semi-acyclic} has the following consequence.

\begin{corollary}\label{c-P5free-near-bip}
{\sc Near-Bipartiteness} can be solved in~$O(n^{16})$ time for $P_5$-free graphs.
\end{corollary}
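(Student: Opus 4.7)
The plan is to derive Corollary~\ref{c-P5free-near-bip} as an immediate consequence of Theorem~\ref{t-P5free-list-semi-acyclic}. Recall from the discussion preceding the definition of {\sc List Semi-Acyclic $3$-Colouring} that a $3$-colouring of a graph~$G$ is semi-acyclic precisely when the vertices coloured~$2$ or~$3$ induce a forest, and that~$G$ admits such a colouring if and only if~$G$ is near-bipartite (the colour class~$1$ playing the role of the independent feedback vertex set). Hence I only need to exhibit a polynomial-time reduction from {\sc Near-Bipartiteness} to {\sc List Semi-Acyclic $3$-Colouring} on the same graph.

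Given a $P_5$-free graph~$G$, I would define $L(v)=\{1,2,3\}$ for every $v\in V(G)$ and feed the instance $(G,L)$ to the algorithm of Theorem~\ref{t-P5free-list-semi-acyclic}. Since every vertex has the full list of colours, any semi-acyclic $3$-colouring of~$G$ trivially respects~$L$, and conversely any list colouring output by the algorithm is in particular a semi-acyclic $3$-colouring of~$G$. Therefore $(G,L)$ is a \texttt{yes}-instance of {\sc List Semi-Acyclic $3$-Colouring} if and only if~$G$ is near-bipartite. The correctness is immediate from this equivalence together with the correctness statement of Theorem~\ref{t-P5free-list-semi-acyclic}.

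For the running time, the reduction itself constructs the lists in $O(n)$ time, and the subsequent call to the algorithm of Theorem~\ref{t-P5free-list-semi-acyclic} runs in $O(n^{16})$ time. Hence the overall running time is $O(n^{16})$, as claimed. There is no genuine obstacle here: the entire work has already been done in Theorem~\ref{t-P5free-list-semi-acyclic}, and this corollary merely records the special case of trivial lists, which is precisely the recognition problem for near-bipartite $P_5$-free graphs.
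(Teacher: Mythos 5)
Your proposal is correct and follows exactly the paper's own argument: set $L(v)=\{1,2,3\}$ for every vertex, observe that~$G$ is near-bipartite if and only if $(G,L)$ is a \texttt{yes}-instance of {\sc List Semi-Acyclic $3$-Colouring} (with colour class~$1$ serving as the independent feedback vertex set), and invoke Theorem~\ref{t-P5free-list-semi-acyclic}. No gaps.
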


\begin{proof}
Let~$G$ be a graph.
Set $L(v)=\{1,2,3\}$ for all $v \in V(G)$.
Then~$G$ is near-bipartite if and only if $(G,L)$ is a \texttt{yes}-instance of {\sc List Semi-acyclic $3$-Colouring}.
In particular, the vertices coloured~$1$ by a semi-acyclic colouring of~$G$ form an independent feedback vertex set of~$G$.
The corollary follows by Theorem~\ref{t-P5free-list-semi-acyclic}.\qedllncs
\end{proof}

\section{Independent Feedback Vertex Sets of $P_5$-free Graphs}\label{s-ind}

In this section we prove that {\sc Independent Feedback Vertex Set} is polynomial-time solvable for $P_5$-free graphs by extending the algorithm from Section~\ref{s-list}: the first part of our proof uses the proof of Theorem~\ref{t-P5free-list-semi-acyclic}, as we will explain in the proof of Lemma~\ref{l-part1}.
As such, we heavily use Definitions~\ref{d-def1} and~\ref{d-def2}.
Let $G=(V,E)$ be a troublesome $P_5$-free graph.
For a trouble-free colouring~$c$ of~$G$, let $t_c(G)=|\{u\in V\; |\; c(u)=1\}|$ denote the number of vertices of~$G$ coloured~$1$ by~$c$.
Let~$t(G)$ be the minimum value~$t_c(G)$ over all trouble-free colourings~$c$ of~$G$, and set~$t_c(G)=\infty$ if no such colouring exists.

\begin{lemma}\label{l-part1}
Let~$G$ be a near-bipartite $P_5$-free graph.
In~$O(n^{16})$ time it is possible to reduce the problem of finding the smallest independent feedback vertex set of~$G$ to finding the value~$t(G')$ of~$O(n^{12})$ troublesome induced subgraphs of~$G$.
\end{lemma}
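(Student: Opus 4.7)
The plan is to adapt the branching algorithm of Theorem~\ref{t-P5free-list-semi-acyclic} so that, instead of merely testing feasibility, each surviving branch reports the minimum number of vertices it forces to receive colour~$1$. The guiding observation is that an independent feedback vertex set of $G$ is exactly the set of vertices coloured~$1$ by a semi-acyclic $3$-colouring, so computing a minimum independent feedback vertex set amounts to minimising, over all semi-acyclic $3$-colourings of $G$, the cardinality of the colour-$1$ class. I would therefore invoke the algorithm of Theorem~\ref{t-P5free-list-semi-acyclic} on the input $(G,L)$ with $L(v)=\{1,2,3\}$ for every $v$, running the $3^3$ outer branching on the dominating set and the $O(n^{12})$ secondary branching that eliminates all edges between distinct sets $V_i'$ and $V_j'$, and stop just before the final 2-SAT call.

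At this point every surviving branch delivers a partition $V(G)=L_1\cup L_2\cup L_3\cup L_{1,2}\cup L_{1,3}\cup L_{2,3}$ satisfying the adjacency restrictions of \figurename~\ref{fig:Lsets}, and every remaining tricky induced $C_4$ is strongly tricky and lies entirely inside either $G[L_2\cup L_{1,3}]$ or $G[L_3\cup L_{1,2}]$. The propagation rules guarantee that both of these induced subgraphs are troublesome in the sense of Definition~\ref{d-def1}: the list structure is of the required form by construction, the singleton-list side is independent by Rule~\ref{rule:nbhd-of-coloured-vertices}, and $G[L_{1,3}]$ and $G[L_{1,2}]$ are bipartite because each sits inside one of the $G_i$ certified bipartite by Rule~\ref{rule:two-colours-bipartite} (after swapping colours $2$ and $3$ in the second case to match Definition~\ref{d-def1}).

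In any semi-acyclic $3$-colouring of $G$ compatible with the branch, the vertices coloured~$1$ are exactly the $|L_1|$ vertices of $L_1$ together with the $L_{1,3}$-vertices receiving colour~$1$ and the $L_{1,2}$-vertices receiving colour~$1$. Since there are no edges between $L_{1,3}$ and $L_{1,2}$, and since the strongly tricky induced $C_4$s decompose between $G[L_2\cup L_{1,3}]$ and $G[L_3\cup L_{1,2}]$, these last two contributions are completely independent, and each is minimised by an optimal trouble-free colouring of the corresponding troublesome subgraph. Consequently the minimum independent feedback vertex set consistent with this branch has cardinality
\[
|L_1|+t\bigl(G[L_2\cup L_{1,3}]\bigr)+t(G'),
\]
where $G'$ denotes $G[L_3\cup L_{1,2}]$ with colours $2$ and $3$ swapped to be troublesome, and branches for which either $t$-value is $\infty$ are discarded. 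Taking the minimum of this expression over all $O(n^{12})$ branches yields the size of a minimum independent feedback vertex set, and the total number of troublesome subgraphs queried is $O(n^{12})$.

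The main obstacle is verifying that no semi-acyclic $3$-colouring, and in particular no minimum independent feedback vertex set, is missed by the branching, because Theorem~\ref{t-P5free-list-semi-acyclic} was designed only to test feasibility. I expect this to be established by checking that each branching step partitions the family of $3$-colourings of $G$ compatible with the current lists: the colour assignments to the dominating set $\{a_1,a_2,a_3\}$ are exhaustively enumerated, and the secondary branching on the first vertex of each ordered $V_i''$-chain to receive colour $c'$ is also exhaustive, since the degenerate case in which no such vertex exists is explicitly included. The runtime analysis is inherited from Theorem~\ref{t-P5free-list-semi-acyclic}: we perform the same $O(n^{12})$ branching, spending only polynomial time per branch on propagation and extraction of the two troublesome subgraphs, giving the claimed $O(n^{16})$ bound.
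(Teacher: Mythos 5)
Your proposal is correct and follows essentially the same route as the paper: run the branching of Theorem~\ref{t-P5free-list-semi-acyclic} with full lists, observe that each surviving branch yields the two troublesome instances $G[L_2\cup L_{1,3}]$ and (after the colour swap) $G[L_3\cup L_{1,2}]$, express the branch-optimal independent feedback vertex set size as $|L_1|$ plus the two $t$-values, and minimise over the $O(n^{12})$ branches. Your explicit check that every semi-acyclic $3$-colouring survives into some branch is exactly the point the paper summarises as ``we branched in every possible way.''
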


\begin{proof}
Let~$G$ be a near-bipartite $P_5$-free graph, that is, we assume that~$G$ has an independent feedback vertex set.
We may assume that~$G$ is connected, otherwise, we solve the problem component-wise.
We set $L(v)=\{1,2,3\}$ for all $v \in V(G)$ and run the algorithm of Theorem~\ref{t-P5free-list-semi-acyclic}.
As can be seen from the proof of Theorem~\ref{t-P5free-list-semi-acyclic}, this algorithm branches up to~$3^3$ times and then sub-branches~$O(n^{12})$ times.
Each branch gives us, after some preprocessing in~$O(n^4)$ time, either a \texttt{no} answer, in which case we discard this branch, or two vertex-disjoint instances of {\sc Trouble-Free Colouring} (one on $G[L_2 \cup L_{1,3}]$ and one (after swapping colours~$2$ and~$3$) on $G[L_3 \cup L_{1,2}]$ and we will denote these two instances by~$G'$ and~$G''$, respectively.
Such instances consist of a troublesome graph~$G'$ or~$G''$, which is an induced subgraph of~$G$ and whose vertices have lists of admissible colours determined by the branching.

As explained in the proof of Theorem~\ref{t-P5free-list-semi-acyclic}, in any branch that we did not discard, $G[L_1 \cup L_{2,3}]$ will have a semi-acyclic $3$-colouring that respects the lists and~$L_1$ will be the set of vertices that are coloured~$1$ in any such colouring.
Therefore, given trouble-free colourings~$c$ and~$c'$ of~$G'$ and~$G''$, respectively, we can obtain an independent feedback vertex set~$S(c,c',G',G'')$ by taking the union of the set~$L_1$ of $G[L_1 \cup L_{2,3}]$ and the sets of vertices in~$G'$ and~$G''$ that~$c$ or~$c'$ colour with colour~$1$.

Now let~$c^*$ and~$c^{**}$ be such that $t(G')=t_{c^*}(G')$ and $t(G'')=t_{c^{**}}(G'')$.
If we know~$t(G')$ and~$t(G'')$, we can compute the size $s(G',G'')=|S(c^*,c^{**},G',G'')|$ in~$O(1)$ time (and the corresponding independent feedback vertex set~$S(c^*,c^{**},G',G'')$ in~$O(n)$ time).
Let~$\hat{s}$ be the minimum~$s(G',G'')$ over all branches of our procedure.
As our procedure had~$O(n^{12})$ branches, given the values of~$t(G')$ and~$t(G'')$ for every branch, we can compute~$\hat{s}$ in $O(n^{12})$~time.
As we branched in every possible way, $\hat{s}$ is the size of a minimum independent feedback vertex set of~$G$.\qedllncs
\end{proof}

We still need a polynomial-time algorithm that computes~$t(G)$ for a given troublesome $P_5$-free graph.
We present such an algorithm in the following lemma (in the proof of this lemma we again use Definitions~\ref{d-def1} and~\ref{d-def2}).

\begin{lemma}\label{l-trouble}
Let~$G$ be a troublesome $P_5$-free graph on~$n$ vertices.
Determining~$t(G)$ can be done in~$O(n^3)$ time.
\end{lemma}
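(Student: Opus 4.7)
My plan is to reduce the problem to a weighted binary-choice optimisation over the components of $G[L_{1,3}]$ and then to solve it via a maximum weight independent set problem on a small auxiliary graph whose structure is controlled by the $P_5$-freeness of $G$.

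In any trouble-free colouring of $G$, every vertex of $L_2$ must receive colour $2$, and each connected component of $G[L_{1,3}]$ admits exactly two proper $2$-colourings with $\{1,3\}$ (obtained by swapping the two sides of its bipartition). Hence a trouble-free colouring is specified by choosing, for each component $C_i$ of $G[L_{1,3}]$, which of its two sides $A_i,B_i$ receives colour $1$, and $t(G)$ is the minimum of $\sum_i |\text{chosen colour-1 side of } C_i|$ subject to the strongly tricky induced $C_4$ constraints. The components of $G[L_{1,3}]$ and their bipartitions can be computed in $O(n^2)$ time. I would then enumerate the strongly tricky induced $C_4$s in $O(n^3)$ time by testing every non-adjacent pair $\{u,v\} \subseteq L_{1,3}$ for whether $|N(u) \cap N(v) \cap L_2| \ge 2$. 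Each such pair yields a constraint that at least one of $u,v$ receives colour $1$. I would classify each constraint: if $u,v$ lie on the same side of the same component it forces that side to be coloured $1$, if on opposite sides of the same component it is automatic, and otherwise it is a cross-component constraint. The forcing constraints can be propagated in $O(n^2)$ time, possibly certifying infeasibility when two conflicting forces meet.

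For the surviving problem I would cast the minimisation as a maximum weight independent set problem on an auxiliary graph $\mathcal{H}$ built as follows: for each still-free component $C_i$, add two weighted nodes $\alpha_i, \beta_i$ of weights $|A_i|, |B_i|$ (corresponding to colouring $A_i$ or $B_i$ with colour~$3$, respectively) together with the edge $\alpha_i\beta_i$; and for each cross-component constraint add an edge between the two (component, side)-nodes that are forbidden from being both coloured $3$. Then $t(G) = |L_{1,3}| - \mu(\mathcal{H})$, where $\mu(\mathcal{H})$ is the maximum weight of an independent set of $\mathcal{H}$ hitting each pair $\{\alpha_i,\beta_i\}$ in exactly one vertex; since weights are non-negative and each pair is itself an edge, an optimal independent set will naturally pick exactly one node per component whenever the instance is feasible.

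The main obstacle is computing $\mu(\mathcal{H})$ within the $O(n^3)$ budget, since maximum weight independent set is hard in general. My approach would be to use the $P_5$-freeness of $G$ to show that $\mathcal{H}$ has very restricted structure: intuitively, a long induced path in $\mathcal{H}$ should pull back, through the common $L_2$-neighbourhoods witnessing the strongly tricky $C_4$s, to a forbidden induced subgraph of $G$, so that $\mathcal{H}$ itself is $P_5$-free or belongs to some tractable subclass. If $\mathcal{H}$ is $P_5$-free one can invoke the algorithm of Lokshantov--Vatshelle--Villanger~\cite{LVV14}; otherwise one might design a dedicated algorithm exploiting the particular form of the cross-component constraints, for instance via a reduction to weighted bipartite vertex cover. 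Combined with the $O(n^3)$ cost of enumerating the strongly tricky $C_4$s, this would yield the claimed bound.
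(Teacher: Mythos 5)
Your setup coincides with the paper's: the binary choice of bipartition orientation per component of $G[L_{1,3}]$, the $O(n^3)$ enumeration of strongly tricky induced $C_4$s via common $L_2$-neighbourhoods of non-adjacent pairs in $L_{1,3}$, the propagation of forcing constraints, and the auxiliary graph $\mathcal{H}$ with two weighted nodes per component joined by an edge plus cross-component constraint edges is exactly the paper's contracted graph $H^*$ (with its ``red'' matching and ``blue'' constraint edges). The difficulty is that everything up to this point is routine; the entire content of the lemma lies in showing that the resulting constrained optimisation on $\mathcal{H}$ is tractable, and this is precisely where your argument has a genuine gap. You conjecture that $P_5$-freeness of $G$ pulls back to make $\mathcal{H}$ itself $P_5$-free ``or belong to some tractable subclass'', but you neither prove this nor identify which subclass. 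What the paper actually proves (its Claim~1, via two delicate $P_5$-pullback case analyses through the witnessing $L_2$-vertices, and crucially only \emph{after} preprocessing rules that forbid a vertex from being blue-adjacent to both sides of another component) is the much more specific statement that the blue constraint edges partition the component-sides into \emph{cliques}. Since at most one vertex per blue clique may receive colour~$3$ and the red edges form a matching between cliques, a propagation argument then reduces the search to only $q+2$ candidate colourings, which is what delivers $O(n^3)$. Without a concrete structural theorem of this kind your plan does not go through.

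Two further problems would remain even if $\mathcal{H}$ were shown to be $P_5$-free. First, your target quantity is not a plain maximum weight independent set: you need an independent set meeting each pair $\{\alpha_i,\beta_i\}$ in \emph{exactly} one vertex, and your claim that an optimal unconstrained independent set ``will naturally pick exactly one node per component whenever the instance is feasible'' is false --- the optimum may omit both nodes of a pair because each is blue-adjacent to heavier vertices already selected, in which case the independent set does not correspond to any colouring, while a different (feasible) colouring may still exist. Invoking Lokshtanov--Vatshelle--Villanger therefore does not directly apply. Second, even for the unconstrained problem, the known algorithm for weighted independent set on $P_5$-free graphs runs in polynomial time of much higher degree than $O(n^3)$, so the stated time bound would not follow. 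Both issues disappear in the paper's approach precisely because the clique structure of the blue components makes the optimisation essentially a propagation problem rather than an independent set computation.
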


\begin{proof}
\setcounter{ctrclaim}{0}
\setcounter{ctrcase}{0}
Let $G=(V,E)$ be a troublesome $P_5$-free graph.
Note that in~$G$, an induced~$C_4$ on vertices~$v_1$, $v_2$, $v_3$,~$v_4$, in that order, is strongly tricky if $v_1,v_3 \in L_{1,3}$ and $v_2,v_4 \in L_2$.

We construct an auxiliary graph~$H$ as follows.
We let $V(H)=L_{1,3}$.
Every edge of~$G[L_{1,3}]$ belongs to~$H$.
We say that such edges are {\em red}.
For non-adjacent vertices $v_1,v_3 \in L_{1,3}$, if there is a strongly tricky induced~$C_4$ on vertices~$v_1$, $v_2$, $v_3$,~$v_4$ with $v_2,v_4 \in L_2$, we add the edge~$v_1v_3$ to~$H$.
We say that such edges are {\em blue}.
Note that~$H$ is a supergraph of~$G[L_{1,3}]$ and that there exists at most one edge, which is either blue or red, between any two vertices of~$H$.

We say that a colouring of~$H$ {\em feasible} if the following two conditions are met:

\begin{enumerate}[(i)]
\renewcommand{\theenumi}{(\roman{enumi})}
\renewcommand{\labelenumi}{(\roman{enumi})}
\item\label{cond:red-not-mono}no red edge is monochromatic, that is, the two end-vertices of every red edge must be coloured, respectively, $1\&3$ or $3\&1$;\\[-11pt]
\item\label{cond:blue:not-both-3}the two end-vertices of every blue edge must be coloured, respectively, $1\&3$, $3\&1$ or $1\&1$ (the only forbidden combination is $3\&3$, as in this case we obtain a strongly tricky induced~$C_4$ in~$G$ with colours~$2$ and~$3$).
\end{enumerate}

\noindent
We note that there is a one-to-one correspondence between the set of trouble-free colourings of~$G$ and the set of feasible colourings of~$H$.
Hence, we need to find a feasible colouring of~$H$ that minimises the number of vertices coloured~$1$.

Let $R_1,\ldots, R_p$ be the components of~$G[L_{1,3}]$, or equivalently, of the graph obtained from~$H$ after removing all blue edges.
We say that these are {\em red} components.
As~$G[L_{1,3}]$ is bipartite and $P_5$-free, all red components of~$H$ are bipartite and $P_5$-free.
We denote the bipartition classes of each~$R_i$ by~$X_i$ and~$Y_i$, arbitrarily (note that these classes are unique, up to swapping their order).
We apply the following rules on~$H$ exhaustively, making sure to only apply a rule if all previous rules have been applied exhaustively.

\begin{enumerate}[\bf Rule 1.]
\item\label{r-blue1}If there is a blue edge in~$H$ between two vertices $u,v\in X_i$ or two vertices $u,v\in Y_i$, then assign colour~$1$ to~$u$ and~$v$.
\item\label{r-blue2}If there is a blue edge~$e$ in~$H$ between a vertex $u\in X_i$ and a vertex $v\in Y_i$, then delete~$e$ from~$H$.
\item\label{r-blue3}If there are blue edges~$uv$ and~$uv'$ where $u\in X_i\cup Y_i$, $v\in X_j$ and $v'\in Y_j$ ($j\neq i$), then assign colour~$1$ to~$u$.
\item\label{r-blue4}If an uncoloured vertex~$u$ is adjacent to a vertex with colour~$3$ via a blue edge, then assign colour~$1$ to~$u$.
\item\label{r-blue5}If an uncoloured vertex~$u$ is adjacent to a coloured vertex~$v$ via a red edge, then assign colour~$1$ to~$u$ if~$v$ has colour~$3$ and assign colour~$3$ to~$u$ otherwise.
\item\label{r-stop}If there is a red edge with end-vertices both coloured~$1$ or both coloured~$3$, or a blue edge with end-vertices both coloured~$3$, then return~\texttt{no}.
\item\label{r-remove-coloured}Remove all vertices that have received colour~$1$ or colour~$3$, keeping track of the number of vertices coloured~$1$.
\end{enumerate}

\noindent
Since each~$R_i$ is connected and bipartite, in every feasible colouring of~$H$, for all~$i$ either all vertices in the set~$X_i$ must be coloured~$1$ and all vertices in the set~$Y_i$ must be coloured~$3$, or vice versa.
Therefore we may safely apply Rules~\ref{r-blue1} and~\ref{r-blue2}.
Suppose that a vertex $u \in V(H)$ is incident with two blue edges~$uv$ and~$uv'$ in~$H$ for two vertices~$v$ and~$v'$ that belong to different partition classes of the same red component.
Then, as either~$v$ or~$v'$ must get colour~$3$ in every feasible colouring of~$H$, we find that~$u$ must receive colour~$1$.
Hence Rule~\ref{r-blue3} is also safe to apply.
Rules~\ref{r-blue4}--\ref{r-stop} are also safe; this follows immediately from the definition of a feasible colouring.
If a vertex~$v$ is assigned colour~$3$, then by Rule~\ref{r-blue4} all its neighbours along blue edges get colour~$1$, so Property~\ref{cond:blue:not-both-3} of a feasible colouring is satisfied for all blue edges with end-vertex~$v$.
If a vertex~$v$ is assigned a colour, then by Rule~\ref{r-blue5} all its neighbours along red edges get a different colour, so Property~\ref{cond:red-not-mono} of a feasible colouring is satisfied.
We conclude that Rule~\ref{r-remove-coloured} is safe.

By Rules~\ref{r-blue1} and~\ref{r-blue2}, if two vertices are in the same red component~$R_i$, we may assume that they are not connected by a blue edge.
Hence, we may assume from now on that red components contain no blue edges in~$H$.
By Rule~\ref{r-blue3}, we may also assume that no vertex in $V(H)\setminus V(R_j)$ is joined via blue edges to both a vertex in~$X_j$ and a vertex in~$Y_j$.

From~$H$ we construct another auxiliary graph~$H^*$ as follows.
First, we replace each red component~$R_i$ on more than two vertices by an edge~$x_iy_i$, which we say is a {\em red} edge.
Hence, the set of red components of~$H$ is reduced to a set of {\em red} components in~$H^*$ in such a way that each red component of~$H^*$ is either an edge or a single vertex.
Next, for $i\neq j$ we add an edge, which we say is a {\em blue} edge, between two vertices~$x_i$ and~$x_j$ if and only if there is a blue edge between a vertex in~$X_i$ and a vertex in~$X_j$.
Similarly, for $i\neq j$ we add a blue edge, between two vertices~$y_i$ and~$x_j$ (resp.~$y_j$) if and only if there is a blue edge between a vertex in~$Y_i$ and a vertex in~$X_j$ (resp.~$Y_j$).

Recall that, by Rules~\ref{r-blue1} and~\ref{r-blue2}, no two vertices in the same component~$R_i$ are connected by a blue edge in~$H$.
So every feasible colouring of~$H$ corresponds to a feasible colouring of~$H^*$ and vice versa.
To keep track of the number of vertices coloured~$1$, we introduce a weight function $w: V(H^*)\to \Z_+$ by setting $w(x_i)=|X_i|$ and $w(y_i)=|Y_i|$.
Our new goal is to find a feasible colouring~$c$ of~$H^*$ that minimises the sum of the weights of the vertices coloured~$1$, which we denote by~$w(c)$.

Since for each~$i$ no vertex in $V(H) \setminus V(R_i)$ is joined via blue edges to both a vertex in~$X_i$ and a vertex in~$Y_i$, we find that~$H^*$ contains no triangle consisting of one red edge and two blue edges.
As red edges induce a disjoint union of isolated edges, this means that the only triangles in~$H^*$ consist of only blue edges.
Let $B_1,\ldots, B_q$ be the components of the graph obtained from~$H^*$ after removing all red edges.
We say that these are {\em blue} components (this includes the case where they are singletons).

We will now show that all blue components of~$H^*$ are complete.

\clm{\label{clm:trouble:1}Each~$B_i$ is a complete graph.}
We prove Claim~\ref{clm:trouble:1} as follows.
For contradiction, suppose there is a blue component~$B_i$ that is not a complete graph.
Then~$B_i$ contains three vertices $u,v,w$ such that~$uv$ and~$vw$ are blue edges and~$uw$ is not a blue edge.
As~$uv$ and~$vw$ are blue edges, $v$ is not in the same red component of~$H^*$ as~$u$ or~$w$.
As no triangle in~$H^*$ can have two blue edges and one red edge, $u$ and~$w$ are not adjacent in~$H^*$, meaning that~$u$, $v$,~$w$ in fact belong to three different red components in~$H$.
Let~$u'v'$ and~$v''w'$ be blue edges of~$H$ corresponding to the edges~$uv$ and~$vw$, respectively.
As~$uw$ is not a blue edge in~$H^*$, we find that~$u'w'$ is not a blue edge in~$H'$.
We distinguish between two cases and show that neither of them is possible.

\thmcase{\label{case:trouble:1}$v'=v''$.}
As~$u'v'$ is a blue edge in~$H$, we find that in~$G$, the vertices~$u'$ and~$v'$ must have at least two common neighbours in~$L_2$.
For the same reason, in~$G$, the vertices~$v'$ and~$w'$ must have at least two common neighbours in~$L_2$.
Since~$u'w'$ is not a blue edge in~$H$, we find that in~$G$, the vertices~$u'$ and~$w'$ have at most one common neighbour in~$L_2$.
Therefore~$G$ contains two vertices $p,q \in L_2$ such that~$p$ is adjacent to~$u'$ and~$v'$ but non-adjacent to~$w'$ and~$q$ is adjacent to~$v'$ and~$w'$ but non-adjacent to~$u'$.
As~$L_2$ is an independent set in~$G$, $p$ is non-adjacent to~$q$.
Now $G[\{u', p, v', q, w'\}]$ is a~$P_5$, which is a contradiction.

\thmcase{\label{case:trouble:2}$v'\neq v''$.}
Let~$R_i$ be the red component of~$H$ containing~$v'$ and~$v''$.
Then, due to the way the red edges of~$H^*$ are constructed, either~$v'$ and~$v''$ both belong to~$X_i$, or they both belong to~$Y_i$.
As~$R_i$ is bipartite, connected and $P_5$-free, $R_i$ must contain a vertex~$s$ that is adjacent to both~$v'$ and~$v''$.
Just as in Case~\ref{case:trouble:1}, in~$G$ the vertices~$u'$ and~$v'$ have at least two common neighbours $p,p' \in L_2$, and~$v'$ and~$w'$ also have at least two common neighbours $q,q' \in L_2$.
As~$L_2$ is independent in~$G$, it follows that $\{p,p',q,q'\}$ is also an independent set (which may have size smaller than~$4$).

\begin{sloppypar}
Now~$p$ must be adjacent to at least one vertex in $\{s,v''\}$, as otherwise $G[\{u',p,v',s,v''\}]$ would be a~$P_5$.
Similarly,~$p'$ must be adjacent to at least one vertex in $\{s,v''\}$.
If~$p$ and~$p'$ are both adjacent to~$s$, then there is a blue edge between~$u'$ and~$s$ in~$H$.
This is not possible, as then we would have applied Rule~\ref{r-blue3}.
If~$p$ and~$p'$ are both adjacent to~$v''$ then there is a blue edge between~$u'$ and~$v''$ in~$H$, so Case~\ref{case:trouble:1} applies and we are done.
We may therefore assume that~$p$ is adjacent to~$s$ but non-adjacent to~$v''$, and that~$p'$ is adjacent to~$v''$ but non-adjacent to~$s$.
Similarly, we may assume that~$q$ is adjacent to~$s$ but non-adjacent to~$v'$, and that~$q'$ is adjacent to~$v'$, but non-adjacent to~$s$.
As~$p$, $p'$,~$q$ have different neighbourhoods in $\{s,v''\}$, we find that~$p$, $p'$,~$q$ are pairwise distinct.
Recalling that $\{p,p',q\} \subseteq L_2$ is an independent set, it follows that $G[\{p,v',p',v'',q\}]$ is a~$P_5$.
This contradiction completes Case~\ref{case:trouble:2}.
Hence we have proven Claim~\ref{clm:trouble:1}.
\end{sloppypar}

\medskip
\noindent
By Claim~\ref{clm:trouble:1}, $H^*$ is the disjoint union of several blue complete graphs with red edges between them.
Recall that we allow the case where these blue complete graphs contain only one vertex.
On~$H^*$ we apply the following rule exhaustively in combination with Rules~\ref{r-blue4}--\ref{r-remove-coloured}.
While doing this we keep track of the weights of the vertices coloured~$1$.

\begin{enumerate}[\bf Rule 1.]
\setcounter{enumi}{7}
\item \label{r-blue6}If there exist (red) edges~$u_1v_1$ and~$u_2v_2$ for $u_1,u_2\in B_i$ and $v_1,v_2\in B_j$ $(i \neq j)$, then assign colour~$1$ to every vertex in $(B_i\cup B_j)\setminus \{u_1,u_2,v_1,v_2\}$.
\end{enumerate}

\noindent
Since Rules~\ref{r-blue4} and~\ref{r-blue5} can be safely applied on~$H$, they can be safely applied on~$H^*$.
It follows that Rules~\ref{r-stop} and~\ref{r-remove-coloured} can also be safely applied on~$H^*$.
We may also safely apply Rule~\ref{r-blue6}: the red edges~$u_1v_1$ and~$u_2v_2$ force~$u_i$ and~$v_i$ to have different colours for $i\in\{1,2\}$, whereas the blue components forbid~$u_1,u_2$ both being coloured~$3$ and~$v_1,v_2$ both being coloured~$3$.
Hence, exactly one of~$u_1,u_2$ and exactly one of $v_1,v_2$ must be coloured~$3$.
Because at most one vertex in any blue component may be coloured~$3$, this implies that all vertices in $(B_i\cup B_j)\setminus \{u_1,u_2,v_1,v_2\}$ must be coloured~$1$.

As every vertex is incident with at most one red edge in~$H^*$, we obtain a resulting graph that is an induced subgraph of~$H^*$ with the following property: if there exist (red) edges~$u_1v_1$ and~$u_2v_2$ for $u_1,u_2\in B_i$ and $v_1,v_2\in B_j$, then $\{u_1,u_2,v_1,v_2\}$ induces a connected component of~$H^*$.
We can colour such a $4$-vertex component in exactly two ways and we remember the colouring with minimum weight (either $w(u_1)+w(v_2)$ or $w(u_2)+w(v_1)$ depending on whether~$u_1$ gets colour~$1$ or~$3$, respectively).
Hence, from now on we may assume that the resulting graph, which we again denote by~$H^*$, does not have such components.
That is, there is at most one red edge between any two blue components of~$H^*$.
As we can colour~$H^*$ component-wise, we may assume without loss of generality that~$H^*$ is connected.

For each~$B_i$ we define the subset~$B_i'$ to consist of those vertices of~$B_i$ not incident with a red edge, and we let $B_i''=B_i\setminus B_i'$.
We note the following.
If we colour every vertex of some~$B_i''$ with colour~$1$, then every neighbour of every vertex of~$B_i''$ in any other blue component~$B_j$ must be coloured~$3$ by Rule~\ref{r-blue5} (recall that vertices in different blue components are connected to each other only via red edges).
As soon as one vertex~$u$ in some blue component~$B_j$ has colour~$3$, all other vertices in $B_j-u$ must get colour~$1$ by Rule~\ref{r-blue4}.
In this way we can use Rule~\ref{r-blue4} and~\ref{r-blue5} exhaustively to {\em propagate} the colouring to other vertices of~$H^*$ where we have no choice over what colour to use.

Recall that no vertex of~$H^*$ is incident with more than one red edge.
This is a crucial fact: it implies that propagation to other blue components of~$H^*$ happens only via red edges~$vw$ between two blue components, one end-vertex of which, say~$v$, is first coloured~$1$, which implies that the other end-vertex~$w$ of such an edge must get colour~$3$; this in turn implies that all other vertices in the blue component containing~$w$ must get colour~$1$ and so on.
Hence, as~$H^*$ was assumed to be connected, colouring every vertex of a set~$B_i''$ with colour~$1$ propagates to all vertices of~$H^*$ except for the vertices of~$B_i'$.
Note that we may still colour (at most) one vertex of~$B_i'$ with colour~$3$.

Due to the above, we now do as follows for each $i\in\{1,\ldots,q\}$ in turn:
We colour every vertex of~$B_i''$ with colour~$1$ and propagate to all vertices of~$H^*$ except for the vertices of~$B_i'$.
If we obtain a monochromatic red edge or a blue edge whose end-vertices are coloured~$3$, we discard this option (by Rule~\ref{r-stop}).
Otherwise, we assign colour~$3$ to a vertex $u\in B_i'$ with maximum weight~$w(u)$ over all vertices in~$B_i'$ (if $B_i'\neq \emptyset$).
We store the resulting colouring~$c_i$ that corresponds to this option.

After doing the above for all~$q$ options, it remains to consider the cases where every~$B_i''$ contains (exactly) one vertex coloured~$3$.
Before we can use another propagation argument that tells us which vertices get colour~$3$, we first perform the following steps, only applying a step when the previous ones have been applied exhaustively.
These steps follow immediately from the assumption that every~$B_i''$ contains a vertex coloured~$3$.

\begin{enumerate}[(i)]
\renewcommand{\theenumi}{(\roman{enumi})}
\renewcommand{\labelenumi}{(\roman{enumi})}
\item \label{enum:trouble-i}Colour all vertices of every~$B_i'$ with colour~$1$ (doing this does not cause any propagation).
\item \label{enum:trouble-ii}If some~$B_i''$ consists of a single vertex, then colour this vertex with colour~$3$, and afterwards propagate by using Rule~\ref{r-blue5} exhaustively.
\item \label{enum:trouble-iii}Remove coloured vertices using Rule~\ref{r-remove-coloured}.
\end{enumerate}

\noindent
If due to~\ref{enum:trouble-ii} we obtain a monochromatic red edge or a blue edge whose end-vertices are coloured~$3$, we discard this option (using Rule~\ref{r-stop}).
Otherwise, we may assume from now on that $B_i'=\emptyset$, so $B_i''=B_i$ due to~\ref{enum:trouble-i} and that $|B_i|\geq 2$ due to~\ref{enum:trouble-ii}.
Note that doing~\ref{enum:trouble-iii} does not disconnect the graph: the vertices in the vertices in~$B_i'$ that are coloured in~\ref{enum:trouble-i} only have neighbours in the clique~$B_i$ (and these are via blue edges) and if a vertex of~$v \in B_i''$ is coloured with colour~$3$ in~\ref{enum:trouble-ii}, then its only neighbour~$w$ (via a red edge) is in a set~$B_j''$ and since~\ref{enum:trouble-i} has been applied exhaustively, the only other neighbours of~$w$ are in~$B_j''$ (via blue edges), so the propagation stops there and the graph does not become disconnected.

By our procedure, every vertex of every blue component~$B_i$ is incident with a red edge, so the total number of outgoing red edges for each~$B_i$ is equal to $|B_i|\geq 2$, and all outgoing red edges go to~$|B_i|$ different blue components.
Hence the graph~$H'$ obtained from~$H^*$ by contracting each blue component to a single vertex has minimum degree at least~$2$.
As~$H'$ has minimum degree at least~$2$, we find that~$H'$ contains an edge that is not a bridge (a bridge in a connected graph is an edge whose removal disconnects the graph).
Let~$uv$ be the corresponding red edge in~$H^*$, say~$u$ belongs to~$B_i$ and~$v$ belongs to~$B_j$.

We have two options to colour~$u$ and~$v$, namely by $1,3$ or $3,1$.
We try them both.
Suppose we first give colour~$1$ to~$u$.
Then we propagate in the same way as before.
Because~$uv$ is not a bridge in~$H'$, eventually we propagate back to~$B_i$ by giving colour~$3$ to an uncoloured vertex of~$B_i$.
When that happens we have ``identified'' the colour-$3$ vertex of~$B_i$ and then need to colour all other vertices of~$B_i$ with colour~$1$.
This means that we can in fact propagate to all blue components of~$H^*$, just as before.
If at some point we obtain a monochromatic red edge or a blue edge with end-vertices coloured~$3$, then we discard this option (by Rule~\ref{r-stop}).
Next, we give colour~$1$ to~$v$ and proceed similarly.

At the end we have at most $q+2$ different feasible colourings of~$H^*$.
We pick the one with minimum weight and translate the colouring to a feasible colouring of~$H$.
Finally, we translate the feasible colouring of~$H$ to a trouble-free colouring of the original graph~$G$.

It remains to analyse the runtime.
Let~$n$ be the number of vertices in~$G$.
Given two non-adjacent vertices in~$L_{1,3}$, we can test whether they have have two common neighbours in~$L_2$ in~$O(n)$ time.
Therefore we can construct~$H$ in~$O(n^3)$ time.

Applying Rules~\ref{r-blue1} and~\ref{r-blue2} takes~$O(n^2)$ time.
Applying Rule~\ref{r-blue3} takes~$O(n^3)$ time.
Rules~\ref{r-blue1}--\ref{r-blue3} only need to be applied exhaustively once, just after~$H$ is first constructed.
Rules~\ref{r-blue4} and~\ref{r-blue5} can be applied exhaustively in~$O(n^3)$ time.
Rule~\ref{r-stop} can be applied in~$O(n^2)$ time.
Rule~\ref{r-remove-coloured} can be applied in~$O(n)$ time.

Constructing~$H^*$ takes~$O(n^2)$ time.
By Claim~\ref{clm:trouble:1}, in~$H^*$ every blue component is a clique, so Rule~\ref{r-blue4} can be applied exhaustively on~$H^*$ in~$O(n^2)$ time.
By construction, every red component of~$H^*$ contains at most one edge, so applying Rules~\ref{r-blue5} and~\ref{r-blue6} on~$H^*$ can be done in~$O(n^2)$ time.
Therefore, Rules~\ref{r-blue4}--\ref{r-blue6} can be applied to~$H^*$ in~$O(n^2)$ time.
It follows that each option of colouring the vertices of some~$B_i''$ with colour~$1$ and then doing the propagation and colouring the vertices of~$B_i'$ takes~$O(n^2)$ time.
Since there are $q\leq n$ blue components, the total time for this is~$O(n^3)$.
Then afterwards we consider the situation where each blue component of~$H^*$ has exactly one vertex coloured~$3$.

We construct~$H'$ in~$O(n^2)$ time and also identify a non-bridge of~$H'$ in~$O(n^2)$ time.
Colouring the corresponding red edges in both ways and doing the propagation takes~$O(n^2)$ time again.
Then, if there is at least one possibility for which we did not return a \texttt{no}-answer, then we have obtained~$O(n)$ different feasible colourings of~$H^*$.
Finding the colouring with minimum weight and translating this colouring into a feasible colouring of~$H$ and then into a trouble-free colouring of the original graph~$G$ also takes~$O(n^2)$ time.\qedllncs
\end{proof}

We are now ready to state and prove the main result of our paper.

\begin{theorem}\label{t-reallymain}
The size of a minimum independent feedback vertex set of a $P_5$-free graph on~$n$ vertices can be computed in~$O(n^{16})$ time.
\end{theorem}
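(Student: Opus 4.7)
The plan is to assemble the main theorem directly from the ingredients already prepared. First, given a $P_5$-free input graph $G$ on $n$ vertices, I would run the algorithm of Corollary~\ref{c-P5free-near-bip} (equivalently, Theorem~\ref{t-P5free-list-semi-acyclic} applied with $L(v)=\{1,2,3\}$ for every vertex $v$) to test whether $G$ is near-bipartite. If $G$ is not near-bipartite, it admits no independent feedback vertex set at all and the algorithm reports this. This preprocessing step costs $O(n^{16})$ time.

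Assuming $G$ is near-bipartite, I would invoke Lemma~\ref{l-part1}. This reduces the task of computing a minimum independent feedback vertex set of $G$ to computing the values $t(G')$ for $O(n^{12})$ troublesome induced subgraphs $G'$ of $G$ arising from the branches of the algorithm behind Theorem~\ref{t-P5free-list-semi-acyclic}; the reduction itself runs in $O(n^{16})$ time. For each such $G'$, Lemma~\ref{l-trouble} computes $t(G')$ (together with a witnessing trouble-free colouring) in $O(n^3)$ time, so computing every required value of $t$ costs a total of $O(n^{12})\cdot O(n^3)=O(n^{15})$ time. As explained in the proof of Lemma~\ref{l-part1}, from these values one reconstructs, in $O(n^{12})$ additional time, an independent feedback vertex set of $G$ of minimum size by taking, for each branch, the union of the set $L_1$ obtained from that branch with the vertices coloured $1$ in the optimal trouble-free colourings of the two associated troublesome subgraphs, and then selecting the branch producing the smallest set.

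Summing up, the dominant terms $O(n^{16})$ (near-bipartiteness test and reduction) and $O(n^{15})$ (all invocations of Lemma~\ref{l-trouble}) give a total running time of $O(n^{16})$, matching the claimed bound. Since there is no further nontrivial mathematical step beyond the careful bookkeeping of runtimes and the combining of the partial solutions across branches, no real obstacle remains; the work lies entirely in verifying that the branching structure from Theorem~\ref{t-P5free-list-semi-acyclic} indeed isolates the choice of vertices coloured $1$ inside the troublesome subgraphs, which is exactly what Lemma~\ref{l-part1} formalises.
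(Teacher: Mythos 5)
Your proposal is correct and follows essentially the same route as the paper: test near-bipartiteness via Theorem~\ref{t-P5free-list-semi-acyclic}, reduce to $O(n^{12})$ instances of {\sc Trouble-Free Colouring} via Lemma~\ref{l-part1}, and solve each in $O(n^3)$ time via Lemma~\ref{l-trouble}, giving $O(n^{16})$ overall. The extra detail you give on reassembling the minimum independent feedback vertex set from the branch-wise optimal trouble-free colourings is exactly the bookkeeping already carried out inside the proof of Lemma~\ref{l-part1}.
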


\begin{proof}
Let~$G$ be a $P_5$-free graph on~$n$ vertices.
As we can check in~$O(n^{16})$ time whether or not~$G$ is near-bipartite, we may assume without loss of generality that~$G$ is near-bipartite.
By Lemma~\ref{l-part1}, in~$O(n^{16})$ time we can reduce this problem to solving~$O(n^{12})$ instances of {\sc Trouble-Free Colouring} on induced subgraphs of~$G$.
By Lemma~\ref{l-trouble}, we can solve each of these instances in~$O(n^3)$ time.
The result follows.\qedllncs
\end{proof}

\noindent
{\bf Remark 1.} From our proof, we can find in polynomial time not just the size of a minimum independent feedback vertex set, but also the set itself.
The corresponding algorithm can also be adapted to find in polynomial time a maximum independent feedback vertex of a $P_5$-free graph,
or an independent feedback vertex set of arbitrary fixed size (if one exists).

\section{Independent Odd Cycle Transversal}\label{a-d}

Recall that an (independent) set $S\subseteq V$ of a graph~$G$ is an (independent) odd cycle transversal if $G-S$ is bipartite.
We also recall that a graph~$G$ has an independent odd cycle transversal if and only if~$G$ is $3$-colourable.
This means that if {\sc $3$-Colouring} is \NP-complete for a graph class~${\cal G}$, then so is {\sc Independent Odd Cycle Transversal}.
Hence, as {\sc $3$-Colouring} is \NP-complete for graphs of girth at least~$g$ for any constant~$g\geq 3$~\cite{EHK98} (see also~\cite{KKTW01,KL07}) and for line graphs~\cite{Ho81}, we find the following result.

\begin{proposition}\label{p-odd}
{\sc Independent Odd Cycle Transversal} is \NP-complete for
\begin{itemize}
\item graphs of girth at least~$g$ for any constant~$g\geq 3$;
\item for line graphs.
\end{itemize}
\end{proposition}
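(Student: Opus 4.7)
The plan is to prove both items by a single generic reduction from {\sc $3$-Colouring}. The passage preceding the proposition already records the key structural fact: a graph admits an independent odd cycle transversal (of any size) if and only if it is $3$-colourable. To turn this into a reduction to the decision version of {\sc Independent Odd Cycle Transversal}, where a size bound $k$ is part of the input, I would simply set $k = |V(G)|$. If $G$ is $3$-colourable, then any of its colour classes is an independent set whose removal leaves a bipartite graph, giving an independent odd cycle transversal of size at most $|V(G)|$; conversely, any independent odd cycle transversal $S$ (of size at most $|V(G)|$, which is vacuous) yields a proper $3$-colouring by giving $S$ colour $1$ and $2$-colouring the bipartite graph $G - S$.

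First I would verify membership in \NP, which is immediate: one guesses $S$ and checks independence and bipartiteness of $G - S$ in linear time. Then I would formalise the reduction $G \mapsto (G, |V(G)|)$, observing that it runs in linear time and returns the \emph{same} underlying graph as the input. Consequently every structural property one cares about is preserved verbatim: if $G$ has girth at least $g$, so does the output graph, and if $G$ is a line graph, so is the output graph.

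Finally, I would invoke the known hardness results for {\sc $3$-Colouring} cited immediately before the proposition: by~\cite{EHK98} (see also~\cite{KKTW01,KL07}) the problem is \NP-complete for graphs of girth at least $g$ for every constant $g\geq 3$, and by~\cite{Ho81} it is \NP-complete for line graphs. Chaining these with the reduction above gives both bullet points. There is essentially no obstacle in the argument, since the non-trivial step (the equivalence of $3$-colourability and existence of an independent odd cycle transversal) has already been observed in the text and the parameter choice $k=|V(G)|$ makes the size constraint vacuous.
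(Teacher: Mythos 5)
Your proposal is correct and matches the paper's argument: the paper likewise observes that a graph has an independent odd cycle transversal if and only if it is $3$-colourable, so \NP-completeness of {\sc $3$-Colouring} on graphs of girth at least~$g$ and on line graphs transfers directly (your explicit choice $k=|V(G)|$ just makes the vacuity of the size bound formal). No gaps.
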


As shown by Chiarelli et~al.~\cite{CHJMP17}, {\sc Odd Cycle Transversal} is also \NP-complete for graphs of girth at least~$g$ for any constant~$g\geq 3$ and for line graphs.
Hence, both problems are \NP-complete for $H$-free graphs if~$H$ contains a cycle or a claw.

Our algorithm for {\sc Independent Feedback Vertex Set} restricted to $P_5$-free graphs can also be used to show that {\sc Independent Odd Cycle Transversal} is polynomial-time solvable for $P_5$-free graphs.
We just have to replace those steps from the algorithm that check whether the vertices minus the independent set (that is, the vertices coloured~$2$ and~$3$) induce a forest by steps that check whether these vertices form a bipartite graph.

\begin{theorem}\label{t-odd}
The size of a minimum independent odd cycle transversal of a $P_5$-free graph on~$n$ vertices can be computed in~$O(n^{16})$ time.
\end{theorem}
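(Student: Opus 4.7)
The plan is to follow the algorithmic framework of Theorem~\ref{t-reallymain}, exploiting the observation that the Independent Odd Cycle Transversal problem is structurally simpler than Independent Feedback Vertex Set. Specifically, $S$ is an independent odd cycle transversal of $G$ if and only if $S$ arises as one colour class of a proper $3$-colouring of $G$: the remaining two colours properly $2$-colour $G-S$, witnessing that it is bipartite. Hence the task reduces to computing a proper $3$-colouring of $G$ that minimises the size of one designated colour class, say colour~$1$.

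First I would replay the proof of Theorem~\ref{t-P5free-list-semi-acyclic}, except that we now solve ordinary List $3$-Colouring rather than its semi-acyclic variant. Concretely, Rules~\ref{rule:no-2-3-C_4s} and~\ref{rule:sort-easy-C_4s} and the subsequent definition and analysis of (strongly) tricky induced $C_4$'s all become vacuous: we only need to enforce that adjacent vertices receive different colours from their lists. The reduction via the Bacs\'o--Tuza dominating set lemma and the chain-graph branching between the sets $V_i''$ and $V_j''$ goes through unchanged, producing the same $O(n^{12})$ sub-instances in which no edges remain between distinct sets $V_i'$. The encoding into $2$-Satisfiability then becomes even easier, since we no longer add clauses $(u_1\vee v_1)$ for strongly tricky $C_4$'s.

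Second, I would prove the analog of Lemma~\ref{l-trouble}. The relevant ``troublesome'' instance is again a graph in which every vertex has list $\{2\}$ or $\{1,3\}$, with $L_2$ independent and $L_{1,3}$ inducing a bipartite graph; but we only need a proper $3$-colouring respecting the lists, minimising the number of vertices coloured~$1$. Here the auxiliary graph $H$ of Lemma~\ref{l-trouble} collapses to $G[L_{1,3}]$ (no blue edges are needed), so the entire Claim~$1$ clique argument, the construction of $H^{*}$, and the bridge/propagation analysis disappear. For each red (i.e., bipartite) component of $G[L_{1,3}]$ there are exactly two choices of $\{1,3\}$-colouring; we simply colour the smaller bipartition class with~$1$. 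This yields an $O(n^2)$ procedure, giving a minimum-weight proper colouring on the troublesome subproblem.

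Third, I would combine these two ingredients in the style of Lemma~\ref{l-part1}: after verifying $3$-colourability of $G$ in $O(n^{16})$ time using the Ho\`ang et al.~\cite{HKLSS10} algorithm (or the simplified version of Theorem~\ref{t-P5free-list-semi-acyclic} described above), we reduce the minimisation to $O(n^{12})$ troublesome subproblems on $G[L_2\cup L_{1,3}]$ and $G[L_3\cup L_{1,2}]$, each solvable in $O(n^2)$ time. Adding the forced set $L_1$ from $G[L_1\cup L_{2,3}]$ to the two independent sets returned by the troublesome subproblems produces a candidate independent odd cycle transversal, and the minimum over all branches is optimal. The total runtime is $O(n^{16})$, dominated by the branching. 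The main obstacle, such as it is, lies only in verifying that no part of the IFVS argument silently relied on the $C_4$ machinery for correctness as opposed to for handling the semi-acyclic constraint; since the branching in Section~\ref{s-list} depends only on the list structure, this verification is routine.
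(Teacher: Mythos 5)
Your proposal is correct and follows essentially the same route as the paper, which proves Theorem~\ref{t-odd} only by remarking that one replaces the forest checks in the {\sc Independent Feedback Vertex Set} algorithm with bipartiteness checks; your elaboration (the $C_4$/tricky machinery becomes vacuous, the troublesome subproblem reduces to choosing the smaller bipartition class in each component of $G[L_{1,3}]$ and $G[L_{1,2}]$, and the $O(n^{12})$ branching is retained because it is what makes the minimisation decompose) is exactly the intended adaptation, spelled out in more detail than the paper itself provides.
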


\section{Conclusions}\label{s-con}

Our main result is that {\sc Independent Feedback Vertex Set} is polynomial-time solvable for $P_5$-free graphs.
As explained in Section~\ref{a-d}, our algorithm can be readily adapted to also solve {\sc Independent Odd Cycle Transversal} for $P_5$-free graphs in polynomial time.
We also proved that {\sc Independent Feedback Vertex Set} is \NP-complete for $H$-free graphs if~$H$ contains a cycle or a claw.
As discussed, the same hardness results were known for {\sc Feedback Vertex Set} and {\sc $3$-Colouring}, and the hardness results for {\sc $3$-Colouring} immediately transfer across to {\sc Independent Odd Cycle Transversal}.

Another problem that is closely related to {\sc $3$-Colouring} is {\sc Independent Vertex Cover}, which is the independent problem variant of {\sc Vertex Cover}.
The latter problem is that of testing whether or not a given graph~$G$ has a set~$S$ of size at most~$k$ for some given integer~$k$, such that the vertices of $G-S$ form an independent set.
Similarly, the {\sc Independent Vertex Cover} problem requires~$S$ to be an independent set and is equivalent to asking whether or not a graph has a $2$-colouring such that one colour class has size at most~$k$.
This problem is clearly solvable in polynomial time.
In contrast, {\sc Vertex Cover} is \NP-complete for graphs of girth at least~$g$ for any constant $g\geq 3$~\cite{Po74}, but {\sc Vertex Cover} stays polynomial-time solvable for claw-free graphs~\cite{Mi80,Sh80}.

Apart from {\sc Independent Vertex Cover}, the complexities of the other problems that we discussed are not settled for $H$-free graphs when~$H$ is a linear forest (disjoint union of one or more paths), or even when~$H$ is a path.
Randerath and Schiermeyer~\cite{RS04} proved that {\sc $3$-Colouring} is polynomial-time solvable for $P_r$-free graphs for $r=6$, and more recently, Bonomo et~al.~\cite{BCMSZ} proved this for $r=7$.
The complexity of {\sc $3$-Colouring} for $P_r$-free graphs is not known for $r\geq 8$ (we refer to~\cite{GJPS} for further details on {\sc $k$-Colouring} for $P_r$-free graphs).

\begin{sloppypar}
The problems {\sc Feedback Vertex Set} and {\sc Odd Cycle Transversal} are polynomial-time solvable for the class of permutation graphs~\cite{BK85}, which contains the class of $P_4$-free graphs~\cite{BK85}, but their complexity is not known for $P_r$-free graphs when $r\geq 5$.
This is in contrast to {\sc Independent Feedback Vertex Set} and {\sc Independent Odd Cycle Transversal} due to our result on $P_5$-free graphs.
For these two problems we do not know their complexity for $r\geq 6$.
As mentioned, Lokshantov et~al.~\cite{LVV14} proved that {\sc Vertex Cover} (or equivalently, {\sc Independent Set}) is polynomial-time solvable for $P_5$-free graphs, and this result was recently extended to $P_6$-free graphs by Grzesik~et~al.~\cite{GKPP17}.
The computational complexity of {\sc Vertex Cover} for $P_r$-free graphs is not known for $r\geq 7$.
\end{sloppypar}

We refer to Table~\ref{t-thetable} for a summary of the above problems.
In this table we also added the {\sc Dominating Induced Matching} problem, which is also known as the {\sc Efficient Edge Domination} problem.
This problem is that of deciding whether or not a graph~$G$ has an independent set~$S$ such that $G-S$ is an {\em induced matching}, that is, the disjoint union of a set of isolated edges.
Cardoso et~al.~\cite{CCDS08} proved that $G-S$ is in fact a maximum induced matching.
We note that every graph~$G$ whose vertex set allows a partition into an independent set and an induced matching is $3$-colourable.
Grinstead et~al.~\cite{GSSH93} proved that {\sc Dominating Induced Matching} is \NP-complete.
Later, the problem was shown to be \NP-complete or polynomial-time solvable for various graph classes.
In particular, Brandst{\"a}dt and Mosca~\cite{BM14} proved that {\sc Dominating Induced Matching} for $P_r$-free graphs is polynomial-time solvable if $r=7$.
Later they extended their result to $r=8$~\cite{BM17}.
The complexity status of {\sc Dominating Induced Matching} is unknown for $r\geq 9$.
Hertz et~al.~\cite{HLRZW15} conjectured that the problem is polynomial-time solvable for $H$-free graphs whenever~$H$ is a forest, each connected component is a subdivided claw, a path or an isolated vertex.
 
Completing Table~\ref{t-thetable} is a highly non-trivial task.
In particular, we note that no \NP-hardness results are known for any of the problems in Table~\ref{t-thetable} when restricted to $P_r$-free graphs.
As such, it would be interesting to know whether the problem of determining whether or not a $P_r$-free graph has an independent feedback vertex set (or equivalently, whether or not a $P_r$-free graph is near-bipartite) is polynomially equivalent to the {\sc $3$-Colouring} problem restricted to $P_{f(r)}$-free graphs for some function~$f$.

To solve {\sc Independent Feedback Vertex} on $P_r$-free graphs for $r\in \{6,7,8\}$, one could try to exploit the techniques used to solve {\sc $3$-Colouring} for $P_r$-free graphs, just as we did for the $r=5$ case in this paper.
However, this seems difficult due to additional complications and a different approach may be required.

\begin{table}[h]
\centering
\begin{tabular}{|l|l|l|l|l|l|l|}
\hline
& $r\leq 4$ & $r=5$ & $r=6$ & $r=7$ &$r=8$ &$r\geq 9$\\[-1pt]
\hline
{\sc $3$-Colouring} & P &P~\cite{RST02} &P~\cite{RS04} &P~\cite{BCMSZ} &? &?\\[-1pt]
\hline
{\sc Vertex Cover} &P &P~\cite{LVV14} &P~\cite{GKPP17} &? &? &?\\[-1pt]
\hline
{\sc Independent Vertex Cover} &P &P &P &P &P &P\\[-1pt]
\hline
{\sc Feedback Vertex Set} &P~\cite{BK85} &? &? &? &? &?\\[-1pt]	
\hline
{\sc Independent Feedback Vertex Set} &P~\cite{TIZ15} &P &? &? &? &?\\[-1pt]
\hline
{\sc Near-Bipartiteness} &P~\cite{BBKNP13} &P &? &? &? &?\\[-1pt]
\hline
{\sc Odd Cycle Transversal} &P~\cite{BK85} &? &? &? &? &?\\[-1pt]
\hline
{\sc Independent Odd Cycle Transversal} &P &P &? &? &? &?\\[-1pt]	
\hline
{\sc Dominating Induced Matching} &P &P &P &P~\cite{BM14} &P \cite{BM17} &?\\[-1pt]
\hline			
\end{tabular}
\vspace*{2mm}
\caption{The complexity of {\sc $3$-Colouring} and eight related problems for $P_r$-free graphs, where $r\geq 1$ is a fixed integer (the columns $r\leq 4$ and $r\geq 8$ represent multiple cases).
Here, P stands for being polynomial-time solvable, whereas open cases are denoted by~?.}\label{t-thetable}
\end{table}

Finally, we point out that the connected problem variants {\sc Connected Feedback Vertex Set}, {\sc Connected Odd Cycle Transversal}, and {\sc Connected Vertex Cover}, which each require the desired set~$S$ of size at most~$k$ to induce a connected graph, are also known to be \NP-complete for line graphs and graphs of arbitrarily large girth.
This was shown by Chiarelli et~al.~\cite{CHJMP17} for {\sc Connected Feedback Vertex Set} and {\sc Connected Odd Cycle Transversal}, whereas Munaro~\cite{Mu} proved that {\sc Connected Vertex Cover} is \NP-hard for line graphs (of planar cubic bipartite graphs) and for graphs of arbitrarily large girth.
Moreover, for these three problems the complexity has not yet been settled for $H$-free graphs when~$H$ is a linear forest (see~\cite{CHJMP17} for some partial results in this direction).

\bibliography{mybib}

\begin{thebibliography}{10}

\bibitem{AGSS16}
A.~Agrawal, S.~Gupta, S.~Saurabh, and R.~Sharma.
\newblock Improved algorithms and combinatorial bounds for independent feedback
  vertex set.
\newblock {\em Proc. IPEC 2016, LIPIcs}, 63:2:1--2:14, 2017.

\bibitem{ANS80}
T.~Akiyama, T.~Nishizeki, and N.~Saito.
\newblock {NP-Completeness} of the {Hamiltonian} cycle problem for bipartite
  graphs.
\newblock {\em Journal of Information Processing}, 3(2):73--76, 1980.

\bibitem{BT90}
G.~Bacs\'o and {\relax Zs}.~Tuza.
\newblock Dominating cliques in {$P_5$-free} graphs.
\newblock {\em Periodica Mathematica Hungarica}, 21(4):303--308, 1990.

\bibitem{BDFJP17c}
M.~Bonamy, K.~K. Dabrowski, C.~Feghali, M.~Johnson, and D.~Paulusma.
\newblock Independent feedback vertex sets for graphs of bounded diameter.
\newblock Manuscript, 2017.

\bibitem{BDFJP17}
M.~Bonamy, K.~K. Dabrowski, C.~Feghali, M.~Johnson, and D.~Paulusma.
\newblock Recognizing graphs close to bipartite graphs.
\newblock {\em Proc. MFCS 2017, LIPIcs}, 83:70:1--70:14, 2017.

\bibitem{BCMSZ}
F.~Bonomo, M.~Chudnovsky, P.~Maceli, O.~Schaudt, M.~Stein, and M.~Zhong.
\newblock Three-coloring and list three-coloring of graphs without induced
  paths on seven vertices.
\newblock {\em Combinatorica}, (in press).

\bibitem{BBKNP13}
A.~Brandst{\"a}dt, S.~Brito, S.~Klein, L.~T. Nogueira, and F.~Protti.
\newblock Cycle transversals in perfect graphs and cographs.
\newblock {\em Theoretical Computer Science}, 469:15--23, 2013.

\bibitem{BK85}
A.~Brandst{\"a}dt and D.~Kratsch.
\newblock On the restriction of some {NP-complete} graph problems to
  permutation graphs.
\newblock {\em Proc. FCT 1985, LNCS}, 199:53--62, 1985.

\bibitem{BM14}
A.~Brandst{\"a}dt and R.~Mosca.
\newblock Dominating induced matchings for {$P_7$}-free graphs in linear time.
\newblock {\em Algorithmica}, 68(4):998--1018, 2014.

\bibitem{BM17}
A.~Brandst{\"a}dt and R.~Mosca.
\newblock Finding dominating induced matchings in {$P_8$}-free graphs in
  polynomial time.
\newblock {\em Algorithmica}, 77(4):1283--1302, 2017.

\bibitem{CCDS08}
D.~M. Cardoso, J.~O. Cerdeira, C.~Delorme, and P.~C. Silva.
\newblock Efficient edge domination in regular graphs.
\newblock {\em Discrete Applied Mathematics}, 156(15):3060--3065, 2008.

\bibitem{CHJMP17}
N.~Chiarelli, T.~R. Hartinger, M.~Johnson, M.~Milani\v{c}, and D.~Paulusma.
\newblock Minimum connected transversals in graphs: New hardness results and
  tractable cases using the price of connectivity.
\newblock {\em CoRR}, abs/1706.08050, 2017.

\bibitem{Ed86}
K.~Edwards.
\newblock The complexity of colouring problems on dense graphs.
\newblock {\em Theoretical Computer Science}, 43:337--343, 1986.

\bibitem{EHK98}
T.~Emden-Weinert, S.~Hougardy, and B.~Kreuter.
\newblock Uniquely colourable graphs and the hardness of colouring graphs of
  large girth.
\newblock {\em Combinatorics, Probability and Computing}, 7(04):375--386, 1998.

\bibitem{FPR99}
P.~Festa, P.~M. Pardalos, and M.~G.~C. Resende.
\newblock Feedback set problems.
\newblock In C.~A. Floudas and P.~M. Pardalos, editors, {\em Encyclopedia of
  Optimization}, pages 1005--1016. Springer, 2nd edition, 2009.

\bibitem{GJ79}
M.~R. Garey and D.~S. Johnson.
\newblock {\em Computers and Intractability: A Guide to the Theory of
  {NP-Completeness}}.
\newblock W. H. Freeman \& Co., New York, NY, USA, 1979.

\bibitem{GH13}
W.~Goddard and M.~A. Henning.
\newblock Independent domination in graphs: {A} survey and recent results.
\newblock {\em Discrete Mathematics}, 313(7):839--854, 2013.

\bibitem{GH09}
P.~A. Golovach and P.~Heggernes.
\newblock Choosability of {$P_5$}-free graphs.
\newblock {\em Proc. MFCS 2009, LNCS}, 5734:382--391, 2009.

\bibitem{GJPS}
P.~A. Golovach, M.~Johnson, D.~Paulusma, and J.~Song.
\newblock A survey on the computational complexity of colouring graphs with
  forbidden subgraphs.
\newblock {\em Journal of Graph Theory}, 84(4):331--363, 2017.

\bibitem{GSSH93}
D.~L. Grinstead, P.~J. Slater, N.~A. Sherwani, and N.~D. Holmes.
\newblock Efficient edge domination problems in graphs.
\newblock {\em Information Processing Letters}, 48(5):221--228, 1993.

\bibitem{GKPP17}
A.~Grzesik, T.~Klimo\v{s}ov\'a, M.~Pilipczuk, and M.~Pilipczuk.
\newblock Polynomial-time algorithm for maximum weight independent set on
  {$P_6$}-free graphs.
\newblock {\em CoRR}, abs/1707.05491, 2017.

\bibitem{HLRZW15}
A.~Hertz, V.~V. Lozin, B.~Ries, V.~Zamaraev, and D.~de~Werra.
\newblock Dominating induced matchings in graphs containing no long claw.
\newblock {\em CoRR}, abs/1505.02558, 2015.

\bibitem{HKLSS10}
C.~T. Ho\`ang, M.~Kami\'nski, V.~V. Lozin, J.~Sawada, and X.~Shu.
\newblock Deciding $k$-colorability of {$P_5$}-free graphs in polynomial time.
\newblock {\em Algorithmica}, 57(1):74--81, 2010.

\bibitem{Ho81}
I.~Holyer.
\newblock The {NP-Completeness} of edge-coloring.
\newblock {\em {SIAM} Journal on Computing}, 10(4):718--720, 1981.

\bibitem{Ka72}
R.~M. Karp.
\newblock Reducibility among combinatorial problems.
\newblock {\em Complexity of Computer Computations}, pages 85--103, 1972.

\bibitem{KP14}
T.~Kociumaka and M.~Pilipczuk.
\newblock Faster deterministic {{\sc Feedback Vertex Set}}.
\newblock {\em Information Processing Letters}, 114(10):556--560, 2014.

\bibitem{KKTW01}
D.~Kr{\'a}l', J.~Kratochv\'{\i}l, {\relax Zs}.~Tuza, and G.~J. Woeginger.
\newblock Complexity of coloring graphs without forbidden induced subgraphs.
\newblock {\em Proc. WG 2001, LNCS}, 2204:254--262, 2001.

\bibitem{La11}
A.~Labarre.
\newblock Comment on {``Complexity} of finding~$2$ vertex-disjoint
  {$(|V|/2)$-cycles} in cubic graphs?''.
\newblock
  \url{http://cstheory.stackexchange.com/questions/6107/complexity-of-finding-$2$-vertex-disjoint-v-$2$-cycles-in-cubic-graphs},
  2011.

\bibitem{LVV14}
D.~Lokshantov, M.~Vatshelle, and Y.~Villanger.
\newblock Independent set in {$P_5$-free} graphs in polynomial time.
\newblock {\em Proc. SODA 2014}, pages 570--581, 2014.

\bibitem{Lo73}
L.~Lov\'asz.
\newblock Coverings and coloring of hypergraphs.
\newblock {\em Congressus Numerantium}, VIII:3--12, 1973.

\bibitem{KL07}
V.~V. Lozin and M.~Kami\'nski.
\newblock Coloring edges and vertices of graphs without short or long cycles.
\newblock {\em Contributions to Discrete Mathematics}, 2(1), 2007.

\bibitem{MOR13}
D.~Marx, B.~O'Sullivan, and I.~Razgon.
\newblock Finding small separators in linear time via treewidth reduction.
\newblock {\em ACM Transactions on Algorithms}, 9(4):30:1--30:35, 2013.

\bibitem{Mi80}
G.~J. Minty.
\newblock On maximal independent sets of vertices in claw-free graphs.
\newblock {\em Journal of Combinatorial Theory, Series B}, 28(3):284--304,
  1980.

\bibitem{MPRS12}
N.~Misra, G.~Philip, V.~Raman, and S.~Saurabh.
\newblock On parameterized independent feedback vertex set.
\newblock {\em Theoretical Computer Science}, 461:65--75, 2012.

\bibitem{Mu17}
A.~Munaro.
\newblock On line graphs of subcubic triangle-free graphs.
\newblock {\em Discrete Mathematics}, 340(6):1210--1226, 2017.

\bibitem{Mu}
A.~Munaro.
\newblock Boundary classes for graph problems involving non-local properties.
\newblock {\em Theoretical Computer Science}, (in press).

\bibitem{Po74}
S.~Poljak.
\newblock A note on stable sets and colorings of graphs.
\newblock {\em Commentationes Mathematicae Universitatis Carolinae},
  15:307--309, 1974.

\bibitem{RS04}
B.~Randerath and I.~Schiermeyer.
\newblock 3-colorability {$\in$P} for {$P_6$}-free graphs.
\newblock {\em Discrete Applied Mathematics}, 136(2--3):299--313, 2004.

\bibitem{RS04b}
B.~Randerath and I.~Schiermeyer.
\newblock Vertex colouring and forbidden subgraphs -- a survey.
\newblock {\em Graphs and Combinatorics}, 20(1):1--40, 2004.

\bibitem{RST02}
B.~Randerath, I.~Schiermeyer, and M.~Tewes.
\newblock Three-colourability and forbidden subgraphs. {II:} polynomial
  algorithms.
\newblock {\em Discrete Mathematics}, 251(1--3):137--153, 2002.

\bibitem{Sh80}
N.~Sbihi.
\newblock Algorithme de recherche d'un stable de cardinalite maximum dans un
  graphe sans etoile.
\newblock {\em Discrete Mathematics}, 29(1):53--76, 1980.

\bibitem{TIZ15}
Y.~Tamura, T.~Ito, and X.~Zhou.
\newblock Algorithms for the independent feedback vertex set problem.
\newblock {\em {IEICE} Transactions on Fundamentals of Electronics,
  Communications and Computer Sciences}, E98-A(6):1179--1188, 2015.

\bibitem{YY06}
A.~Yang and J.~Yuan.
\newblock Partition the vertices of a graph into one independent set and one
  acyclic set.
\newblock {\em Discrete Mathematics}, 306(12):1207--1216, 2006.

\end{thebibliography}

\end{document}